\newif\ifconf
	\renewcommand\paragraph{\@startsection{paragraph}{4}{\z@}%
	{0.5ex \@plus0.2ex \@minus.1ex}
	{0em}%
  	{\normalfont\normalsize\bfseries}} 
\title{
	\bfseries The Coding Limits of Robust Watermarking for Generative Models
}
\author{
\IEEEauthorblockN{
Danilo Francati\IEEEauthorrefmark{1}\IEEEauthorrefmark{4},
Yevin Nikhel Goonatilake\IEEEauthorrefmark{2}\IEEEauthorrefmark{4},
Shubham Vivek Pawar\IEEEauthorrefmark{3}\IEEEauthorrefmark{4},\\
Daniele Venturi\IEEEauthorrefmark{1},
Giuseppe Ateniese\IEEEauthorrefmark{2}
}
\\
\IEEEauthorblockA{\IEEEauthorrefmark{1}
Sapienza University of Rome, Rome, Italy\\
\texttt{\{francati, venturi\}@di.uniroma1.it}}
\IEEEauthorblockA{\IEEEauthorrefmark{2}
George Mason University, Fairfax, USA\\
\texttt{\{ygoonati, ateniese\}@gmu.edu}}
\IEEEauthorblockA{\IEEEauthorrefmark{3}
Royal Holloway, University of London, Egham, UK\\
\texttt{shubham.pawar.2022@live.rhul.ac.uk}}
\\
\IEEEauthorblockA{\IEEEauthorrefmark{4}Equal contribution.}
}
\author{
Danilo Francati\thanks{Equal contribution.} \\
Sapienza University of Rome \\
\texttt{francati@di.uniroma1.it}
\and
Yevin Nikhel Goonatilake\footnotemark[1] \\
George Mason University \\
\texttt{ygoonati@gmu.edu}
\and
Shubham Vivek Pawar\footnotemark[1] \\
Royal Holloway, University of London \\
\texttt{shubham.pawar.2022@live.rhul.ac.uk}
\and
Daniele Venturi \\
Sapienza University of Rome \\
\texttt{venturi@di.uniroma1.it}
\and
Giuseppe Ateniese \\
George Mason University \\
\texttt{ateniese@gmu.edu}
}
\date{}
\newcommand{\NN}{\mathbb{N}}
\newcommand{\cX}{\mathcal{X}}
\newcommand{\cY}{\mathcal{Y}}
\newcommand{\cK}{\mathcal{K}}
\newcommand{\cF}{\mathcal{F}}
\newcommand{\bin}{\{0,1\}}
\newcommand{\supp}{\mathrm{supp}}
\newcommand{\cModel}{\mathcal{M}}
\newcommand{\poly}{\textsf{poly}}
\newcommand{\blue}[1]{{\color{black}#1}}
\newtheorem{thm}{Theorem}
\newtheorem{coro}{Corollary}
\theoremstyle{definition}
\newtheorem{defn}{Definition}
\newtheorem{construction}{Construction}
\newcommand{\bits}{\{0, 1\}}
\newcommand{\KGen}{\mathsf{KGen}}
\newcommand{\Encode}{\mathsf{Enc}}
\newcommand{\Decode}{\mathsf{Dec}}
\let\originalleft\left
\let\originalright\right
\renewcommand{\left}{\mathopen{}\mathclose\bgroup\originalleft}
\renewcommand{\right}{\aftergroup\egroup\originalright}
\newcommand{\sk}{\mathit{sk}}
\newcommand{\negl}{\mathsf{negl}}
\newcommand{\secpar}{\lambda}
\newcommand{\secparam}{{1^\secpar}}
\newcommand{\getsr}{{\:{\leftarrow{\hspace*{-3pt}\raisebox{.75pt}{$\scriptscriptstyle\$$}}}\:}} 
\newcommand{\Prob}{\mathbb{P}}
\newcommand{\invalid}{\mathtt{invalid}}
\newcommand{\valid}{\mathtt{valid}}
\newcommand{\tampered}{\mathtt{tampered}}
\newcommand{\true}{\mathtt{true}}
\newcommand{\false}{\mathtt{false}}
\newcommand{\alphabet}{{\Sigma}}
\newcommand{\msg}{\mu}
\newcommand{\cdw}{\gamma}
\newcommand{\cdwlen}{n}
\newcommand{\thr}{t}
\newcommand{\dist}{\mathrm{dist}}
\newcommand{\LLM}{\mathsf{Model}}
\newcommand{\Watermark}{\mathsf{Watermark}}
\newcommand{\Detect}{\mathsf{Detect}}
\begin{document}

\maketitle

\begin{abstract}
We study a basic question about cryptographic watermarking for generative models: how reliable can a watermark remain when an adversary is allowed to corrupt the encoded signal? To address this question, we introduce a minimal coding abstraction that we call a \emph{zero-bit tamper-detection code}. This is a secret-key procedure that samples a pseudorandom codeword and, given a candidate word, decides whether it should be treated as unmarked content or as the result of tampering with a valid codeword. It captures the two core requirements of robust watermarking: \emph{soundness} and \emph{tamper detection}.

Within this abstraction we prove a sharp unconditional limit on robustness to independent symbol corruption. For an alphabet of size $q$, there is a critical corruption rate of $1 - 1/q$ such that no scheme with soundness, even relaxed to allow a fixed constant false positive probability on random content, can reliably detect tampering once an adversary can change more than this fraction of symbols. In particular, in the binary case no cryptographic watermark can remain robust if more than half of the encoded bits are modified. We also show that this threshold is tight by giving simple information-theoretic constructions that achieve soundness and tamper detection for all strictly smaller corruption rates.

We then test experimentally whether this limit appears in practice by looking at the recent watermarking for images of Gunn, Zhao, and Song (ICLR 2025). We show that a simple crop and resize operation reliably flipped about half of the latent signs and consistently prevented belief-propagation decoding from recovering the codeword, erasing the watermark while leaving the image visually intact.

\ifconf
\else
    \vspace{.3cm}
    \noindent{\bf Keywords:} watermarking, pseudorandom codes, impossibility, generative models, diffusion models.
\fi
\end{abstract}

\ifconf
\begin{IEEEkeywords}
watermarking, pseudorandom codes, impossibility, generative models, diffusion models.
\end{IEEEkeywords}
\fi

\ifconf
\else
\newpage
\fi

\section{Introduction}\label{sec:intro}

In recent years, generative AI models such as GPT, Llama, and Claude have made it increasingly difficult to tell human-produced content from machine-generated text, code, and images. This new reality raises a central question for both technology and society: how can we reliably distinguish what is AI-generated from what is authentically human? 

Watermarking is one of the most promising strategies for meeting this challenge~\cite{US23,EU24}. The idea is to embed a secret, imperceptible pattern into model outputs at generation time, enabling later verification by anyone with the appropriate key. Ideally, a watermark should be robust, surviving even substantial adversarial editing, while also being undetectable to those without the key and leaving the original content unchanged in quality or meaning~\cite{Aaronson22,Kirchenbauer_23,C:ChrGun24}.

Recent work by Christ and Gunn~\cite{C:ChrGun24} formalized cryptographic watermarking using pseudorandom error-correcting codes (PRCs), showing that such watermarks can be both undetectable and robust to significant amounts of tampering. But their results naturally lead to a deeper question: within such cryptographically grounded, coding-based frameworks, what are the true limits of robustness? Is there an unconditional barrier that no watermarking scheme of this form can overcome?

This question has been explored from two sides. On one side, the ``Watermarks in the Sand'' (WiTS) work~\cite{zhang2023watermarks,zhang2024watermarks} showed that any sufficiently robust watermark is, in principle, vulnerable: if an adversary has enough power, especially access to quality and perturbation oracles, it can eventually erase any watermark without degrading content quality. However, WiTS leaves open the concrete thresholds for robustness that efficient watermarking schemes can actually achieve. On the other side, PRC-based schemes demonstrate the power of current cryptographic techniques but do not establish whether their level of robustness is the best possible in the coding sense.

In this paper, we bring these perspectives together. We introduce a simple abstraction, the \emph{zero-bit tamper-detection code}, which we use as a minimal model for cryptographic watermarking based on pseudorandom codes. This notion can be viewed as the zero-bit specialization of a PRC equipped with an explicit tamper-detection predicate, and it captures the main coding issues that arise in such watermarking schemes. Using this notion, we prove a tight unconditional threshold on robustness to independent symbol corruption in this model: no scheme realizing a zero-bit tamper-detection code can reliably survive if more than half of the encoded bits are altered in the binary case. More generally, for a $q$-ary alphabet the corresponding limit is $(1-1/q)$ of the encoded symbols. Conversely, for any small constant $\delta$, we give explicit constructions that approach these limits, providing robustness up to just under half of the bits (for binary) or just under $(1-1/q)$ (for $q$-ary) of the symbols changed.

To make our results concrete, we analyze a recent state-of-the-art PRC-based watermarking scheme for images by Gunn, Zhao, and Song~\cite{gunn2025undetectable}. We show that a simple crop-and-resize operation, one that visually preserves the image but changes about half of its latent bits, is sufficient to consistently prevent belief-propagation decoding from recovering the codeword. This suggests that the theoretical limit we establish is relevant in at least one practical watermarking pipeline.

This attack also highlights a contrast between modalities: in text, watermark removal typically requires extensive edits that alter the content, while for images, even a single benign transformation can remove the watermark without changing the image in an obvious way. This makes robust watermarking in images particularly fragile, since the watermark can disappear without a visible cost.

Our findings position the impossibility results of WiTS in a precise, quantitative framework for PRC-style cryptographic watermarks: WiTS demonstrates that all robust watermarks are eventually breakable in a powerful black-box oracle model, while our work identifies a concrete numerical threshold at which robustness becomes unconditionally impossible for schemes that can be modeled as zero-bit tamper-detection codes under independent symbol corruption.

Looking forward, our results suggest that any significant increase in robustness within this PRC-based, cryptographic framework will require fundamentally new ideas, potentially leveraging semantic or structural features of content rather than cryptographic pseudorandomness alone.

\subsection{Our Contributions}

Our contributions are as follows:

\begin{enumerate}[leftmargin=*]
\item \textbf{A simplified abstraction for PRC-style watermarking.}
We introduce and formalize \emph{zero-bit tamper-detection codes}, a secret-key primitive that we use as a minimal model for cryptographic watermarking based on pseudorandom codes. These codes do not encode explicit messages but instead focus on two properties: \emph{soundness} and \emph{tamper detection}. They can be viewed as the zero-bit specialization of pseudorandom error-correcting codes equipped with an explicit tamper-detection predicate (and without pseudorandom codewords). We show how any watermarking scheme that is sound and robust for a given family of tampering functions induces a zero-bit tamper-detection code with the same guarantees, even when the watermark detector depends on the prompt. In this sense, zero-bit tamper-detection codes capture the core coding-theoretic requirements of PRC-based watermarking for generative models.



\item \textbf{Tight unconditional limits for watermarking schemes.}
Within this abstraction we establish a sharp limit on robustness to independent symbol tampering. For an alphabet of size $q$ there is a critical corruption rate
\[
\alpha^* = 1 - \frac{1}{q}
\]
such that no zero-bit tamper-detection code can simultaneously satisfy soundness and tamper detection for independent tampering at rates exceeding $\alpha^*$ by any fixed constant slack. In particular, in the binary case ($q = 2$) no such code can remain both sound and robust once more than half of the encoded bits are modified. Conversely, we give a simple information-theoretic construction that achieves soundness and tamper detection for all corruption rates strictly below $\alpha^*$, up to an arbitrarily small constant slack. 

We further show that this barrier persists even if soundness is weakened to \emph{constant-bounded soundness}. For any constant $c \in (0,1]$, once an adversary can corrupt more than an $(\alpha^* + \varepsilon)$ fraction of symbols (for any fixed $\varepsilon > 0$), any code that is sound on all but a $c$ fraction of random strings must incur tamper-detection error close to $1 - c$ at that corruption level. In particular, even very weak schemes that behave like a coin flip on random inputs (for example, $c = 1/2$) cannot robustly detect tampering beyond this threshold.

\item \textbf{Implications for PRC-based watermarking.}
By relating zero-bit tamper-detection codes to pseudorandom error-correcting codes, and using the reduction from watermarking schemes to tamper-detection codes, we obtain direct consequences for PRC-based watermarking. In particular, for the independent symbol tampering model considered by Christ and Gunn~\cite{C:ChrGun24}, our lower bound shows that their robustness guarantees are  optimal: no cryptographic watermark (binary or $q$-ary, secret-key or public-key) whose robustness can be expressed through such a coding layer can tolerate a larger fraction of symbol errors in this model.

\item \textbf{A concrete attack illustrating the limit.}
Finally, we show that the coding-theoretic limit is numerically relevant for at least one state-of-the-art watermarking pipeline. We analyze the PRC-based image watermark for Stable Diffusion proposed by Gunn, Zhao, and Song~\cite{gunn2025undetectable}, and we study the effect of a simple crop-and-resize transformation. This transformation preserves the visual appearance of the image yet induces latent sign errors concentrated near fifty percent and consistently causes belief-propagation decoding to fail, thereby erasing the watermark.

Earlier work had already established that watermarks can, in principle, be removed. WiTS shows this using oracle access and iterative regeneration~\cite{zhang2023watermarks,zhang2024watermarks}, while UnMarker demonstrates removal in a fully black-box setting using optimization-based spectral attacks that slightly degrade the image~\cite{unmarker2025}. Our case study differs in that it reaches essentially the same corruption regime using only a single, routine image operation, with no oracle access and no optimization. In this sense, it provides a concrete example where the unconditional robustness threshold arises from an ordinary edit rather than a highly tailored attack.
\end{enumerate}

\subsection{Related work}
Watermarking refers to the process of embedding a signal in generated content such as images~\cite{boland_95,cox_et_al_07,Hayes_et_al_17,Ruanaidh_1997,Zhu_18}, text~\cite{text_1,text_2}, audio~\cite{arnold_00,Boney_96}, or video that is imperceptible to humans but algorithmically detectable. Its primary use is to enable attribution or provenance without significantly degrading content quality. The study of watermarking spans decades and includes both practical systems and formal analyses, with results on constructing watermarking schemes as well as breaking or proving impossibility. With the rise of modern generative models, watermarking has become a sharper test of the trade-offs between imperceptibility, robustness, and efficiency.

\ifconf
\paragraph{Constructions}
\else
\paragraph{Constructions.}
\fi
Watermarking methods are commonly classified based on when the watermark is applied: either after generation (post-processing) or during the generation process itself (in-processing). Post-processing schemes embed the watermark into the output after it has been generated. These approaches often degrade output quality~\cite{cox_et_al_07,Wan_22,An_24}. In-processing schemes, by contrast, embed the watermark during content generation. They were first introduced in image generators~\cite{Yu_21}, and later adapted to large language models (LLMs)~\cite{Kirchenbauer_23}. Image watermarking techniques include modifying weights through pretraining or fine-tuning~\cite{Fei_22,Fernandez_23,Zeng_23,Zhao_23c}, or intervening in the sampling trajectory of diffusion models~\cite{wen_2023}. In the LLM setting, most methods operate at inference time by biasing token selection in a way that encodes a hidden message~\cite{Kirchenbauer_23,Christ_23}. Some schemes further ensure that the overall distribution over outputs remains statistically indistinguishable from unmarked outputs~\cite{kuditipudi_24}, preserving quality. 

A more recent line of work leverages cryptographic and coding-theoretic tools to design undetectable watermarking schemes. One direction~\cite{C:ChrGun24} proceeds via pseudorandom error-correcting codes (PRCs) as the watermark signals. These codes can be constructed using low-density parity-check (LDPC) codes~\cite{C:ChrGun24,STOC:AACDG25,ghentiyala_24} and allow for robust detection under edits. PRC-based watermarking was recently applied to images~\cite{gunn2025undetectable}. The main limitation~\cite{gunn2025undetectable} is that although the underlying PRC is robust to modifications on a bounded number of bits, the resulting PRC-based watermarking is robust only to independent changes, limiting its applicability in real-world scenarios. In parallel, a recent public watermarking scheme uses digital signatures, allowing third parties to verify the watermark and ensuring robustness as long as a small window of the generated content remains unmodified~\cite{garg23}.

\ifconf
\paragraph{Attacks and Impossibilities}
\else
\paragraph{Attacks and Impossibilities.}
\fi
Alongside construction efforts, recent work has explored the limitations of existing watermarking schemes. In text, the most prominent attack is paraphrasing~\cite{paraphrasing_attack_1,paraphrasing_attack_2,zhang2023watermarks,zhang2024watermarks}, rewriting the watermarked content while preserving its meaning, using language models, translation tools, or manual edits. These attacks often succeed partially, but at the cost of degraded output quality. In the vision domain, a range of attacks aim to remove watermarks by adding noise to the image or latent representation, or by using optimization techniques to remove the watermark~\cite{lukas_24,saberi_24,Zhao_24b}. Many of these attacks have limitations.
For example, Zhao et al.~\cite{Zhao_24b} focus on post-hoc watermarking methods, while Saberi et al.~\cite{saberi_24} require either white-box access to the detector or access to many watermarked and unwatermarked examples. 

A more general impossibility result by Zhang et al.~\cite{zhang2024watermarks,zhang2023watermarks} shows that constructing robust watermarking schemes is impossible under certain conditions. They model generated outputs as nodes in a graph and prove that if an adversary has access to two oracles (a quality oracle that tells the adversary whether the output is still valid, and a perturbation oracle that allows small edits preserving semantics) then the adversary can walk through the graph until the watermark is erased without degrading content quality. This result does not specify an exact condition for when removal occurs, only that eventually such a walk succeeds. 

In contrast, our result makes no oracle assumptions and provides a concrete lower bound in an independent symbol corruption model. We show that if more than half of the encoded bits are effectively randomized, then no watermark can remain both sound and robust. Prior work demonstrates that robust watermarking cannot be sustained indefinitely. Our result identifies a precise corruption threshold in this setting beyond which robustness is unconditionally impossible.


\section{Preliminaries}\label{sec:prel}
\subsection{Notation}\label{sec:notation}
Small letters (such as $x$) denote individual objects or values; calligraphic letters (such as $\mathcal{X}$) denote sets; sans-serif letters (such as $\mathsf{A}$) denote algorithms.
For a string $x \in \{0,1\}^*$, $|x|$ denotes its length. For a set $\mathcal{X}$, $|\mathcal{X}|$ is its cardinality. If $x$ is chosen uniformly at random from $\mathcal{X}$, we write $x \getsr \mathcal{X}$.
The Hamming distance $\dist(x, x')$ between two strings $x$ and $x'$ (over an alphabet $\alphabet$) is the number of positions where they differ.

For a deterministic algorithm $\mathsf{A}$, $y = \mathsf{A}(x)$ means that $y$ is the output of $\mathsf{A}$ on input $x$. For a randomized algorithm $\mathsf{A}$, we write $y \getsr \mathsf{A}(x)$ to indicate $y$ is sampled from the output distribution of $\mathsf{A}$ on input $x$. Alternatively, $y = \mathsf{A}(x; r)$ denotes the output when $\mathsf{A}$ runs on $x$ with randomness $r$. An algorithm $\mathsf{A}$ is called \emph{probabilistic polynomial-time} (PPT) if it is randomized and always halts in time polynomial in $|x|$.

We use $\mathsf{negl}(\secpar)$ for an arbitrary negligible function of the security parameter $\secpar \in \mathbb{N}$, i.e., for all $c > 0$, $\mathsf{negl}(\secpar) = o(\secpar^{-c})$. We use $\poly(\secpar)$ to denote a polynomial function of $\secpar$. Unless noted otherwise, all algorithms receive the security parameter $1^\secpar$ as input.

\subsection{Watermarking for Generative Models}
\label{sec:watermarking}
We recall the formalization of watermarking for generative models, following~\cite{zhang2023watermarks,zhang2024watermarks}.

A generative model is any (possibly randomized) algorithm that, given a prompt $x$ (such as a question or a text description), produces an output $y$ (such as text or an image).

\begin{defn}[Generative models]
\label{def:gen-models}
A conditional generative model $\LLM: \cX \rightarrow \cY$ is a probabilistic polynomial-time (PPT) algorithm that, given a prompt $x \in \cX$, outputs $y \in \cY$. Here, $\cX$ is the \emph{prompt space}, and $\cY$ is the \emph{output space}. We write $y \getsr \LLM(x)$ to denote sampling a response from the model on prompt $x$.
\end{defn}

A watermarking scheme is a systematic way to mark the outputs of a generative model, so that later a verifier with the appropriate key can detect whether a given output is watermarked. We focus on \emph{secret-key} watermarking: both embedding and detection require a secret key.

A watermarking scheme $\Pi = (\Watermark, \Detect)$ for a family of generative models $\cModel = \{\LLM\}$ consists of two efficient algorithms:
\begin{description}
    \item[$\Watermark(1^\lambda, \LLM)$:] On input a security parameter $1^\lambda$ and a model $\LLM$, the algorithm outputs a secret key $\kappa \in \cK$ and a watermarked model $\LLM_\kappa: \cX \rightarrow \cY$.
    \item[$\Detect(\kappa, x, y)$:] On input a secret key $\kappa$, a prompt $x \in \cX$, and an output $y \in \cY$, this \blue{deterministic} algorithm returns either $\mathsf{true}$ or $\mathsf{false}$, indicating whether $y$ is watermarked (in response to $x$).
\end{description}

A robust watermarking scheme should satisfy three key properties.
First, \textbf{correctness}: watermarked outputs should always be recognized as such.

\begin{defn}[Correctness of watermarking]
\label{def:corr-models}
Let $\cModel = \{\LLM:\cX \rightarrow \cY\}$ be a class of generative models. We say that $\Pi$ satisfies \emph{correctness} if for every $\LLM \in \cModel$ and for every $x \in \cX$, we have
\ifconf
\[
\Prob\left[\Detect(\kappa, x, y) = \true\right] = 1,
\]
where $(\kappa, \LLM_\kappa) \getsr \Watermark(1^\lambda,\allowbreak \LLM)$ and $y \getsr \LLM_\kappa(x)$.
\else
\[
\Prob\left[\Detect(\kappa, x, y) = \true: (\kappa, \LLM_\kappa) \getsr \Watermark(1^\lambda,\allowbreak \LLM),\ y \getsr \LLM_\kappa(x)\right] = 1.
\]
\fi
\end{defn}

Second, \textbf{soundness}: non-watermarked outputs (such as human-written or independently generated model outputs) should almost never be mistakenly detected as watermarked. 

\begin{defn}[Soundness of watermarking]
\label{def:sound-watermark}
Let $\cModel = \{\LLM:\cX \rightarrow \cY\}$ be a class of generative models. We say that $\Pi$ satisfies \emph{soundness} if for every $\LLM \in \cModel$, every $x \in \cX$, and every \blue{fixed} $y \in \cY$, 
\ifconf
\[
\Prob\left[\Detect(\kappa, x, y) = \true\right] \leq \negl(\lambda),
\]
where $(\kappa, \LLM_\kappa) \getsr \Watermark(1^\lambda, \LLM)$.
\else
$\Prob\left[\Detect(\kappa, x, y) = \true: (\kappa, \LLM_\kappa) \getsr \Watermark(1^\lambda, \LLM)\right] \leq \negl(\lambda)$.
\fi
\end{defn}

Third, \textbf{robustness}: the watermark should survive small modifications to the output.
Robustness requires that a watermark remain detectable even if an adversary modifies a small part of the output. To capture this, we consider \blue{length-preserving \emph{tampering functions} $f : \Sigma^* \rightarrow \Sigma^*$, where $\alpha \in (0,1)$ specifies the target fraction of edits relative to the input length}.

We study two models of adversarial tampering:
(1) In the \emph{arbitrary tampering} model, the adversary may choose any positions and values for up to \blue{an $\alpha$ fraction of changes}, allowing fully coordinated edits.
(2) In the \emph{independent tampering} model, \blue{the adversary changes each symbol independently, and we require that the total number of changes is at most an $\alpha$ fraction of the input length with overwhelming probability}.

Let $\hat\cF_\alpha$ denote the set of (possibly randomized) functions $f$ as above (arbitrary), and let $\hat\cF^{\mathrm{ind}}_\alpha \subseteq \hat\cF_\alpha$ denote those which act independently on each symbol.

\begin{defn}[$\hat\cF$-Robustness of watermarking]
\label{def:robust-watermark}
Let $\cModel = \{\LLM:\blue{\cX \rightarrow \Sigma^*}\}$ be a class of generative models.
A watermarking scheme $\Pi$ is \emph{$\hat\cF$-robust} if for every $\LLM \in \cModel$, every $x \in \cX$, and every $f \in \hat\cF_\alpha$, we have
\ifconf
\[
\Prob\left[\Detect(\kappa, x, \tilde y) = \false \land \tilde y \ne y\right] \leq \negl(\lambda),
\]
where 
$(\kappa, \LLM_\kappa)\getsr\Watermark(1^\lambda, \LLM)$, $y \getsr \LLM_\kappa(x)$, and $\tilde y \getsr f(y)$.
\else
\[
\Prob\left[\Detect(\kappa, x, \tilde y) = \false \land \tilde y \ne y:
\begin{array}{c}
(\kappa, \LLM_\kappa)\getsr\Watermark(1^\lambda, \LLM) \\
y \getsr \LLM_\kappa(x);\; \tilde y \getsr f(y)
\end{array}
\right] \leq \negl(\lambda).
\]
\fi
\end{defn}

In our results, we primarily consider small constants $\alpha > 0$. When we refer to $\hat\cF^{\mathrm{ind}}_\alpha$, the impossibility results are strongest, in the sense that we show robustness is already impossible against this restricted class of independent, uncoordinated edits. This immediately implies impossibility against the larger class of arbitrary tampering functions.

We always require $|f(y)| = |y|$, meaning all tampering functions are length-preserving. This restriction is standard in coding theory and only strengthens our impossibility results: if robust watermarking is impossible even when the adversary must preserve output length, it is certainly impossible when more general modifications, such as insertions or deletions, are permitted.

Our model further allows the watermarking algorithm to access the generative model's internals, and allows the detector to depend on both the prompt and output. Impossibility under these permissive conditions immediately implies impossibility in any more restricted setting.

\section{Zero-bit Tamper-detection Codes}\label{sec:tdc}

Our lower bounds will be phrased in terms of a simple coding primitive that captures the behavior of watermark detectors. Informally, we imagine that for each secret key there is a single special codeword in $\alphabet^\cdwlen$. Given a string $y \in \alphabet^\cdwlen$, the decoder must decide whether
\begin{enumerate}
    \item $y$ is the honest codeword produced under the key,
    \item $y$ is unrelated to any codeword for that key, or
    \item $y$ should be treated as a tampered version of the honest codeword.
\end{enumerate}
There is no explicit message to recover; the output is only this three-way classification. We refer to such objects as \emph{zero-bit tamper-detection codes}.

We work in a secret-key setting: both the encoder and the decoder have access to the same key $\sk$, which is hidden from the adversary. This is a favorable model for the designer of the code. Any public-key or keyless detector can be viewed as a special case in which the key is fixed and revealed in advance. Thus, if we show that no scheme exists even in this more permissive secret-key model, the impossibility automatically applies to all more restricted variants.

It is helpful to compare this notion with a familiar primitive. Suppose we have a secure message authentication code with tagging algorithm $\mathsf{Tag}_\sk(\cdot)$. Fix a message $m = 0$, and for a given key $\sk$ let $\cdw = \mathsf{Tag}_\sk(0)$. We can define an encoder that always outputs $\cdw$, and a decoder that declares $\valid$ on input $\cdw$ and $\tampered$ on every other input. This construction detects any change to $\cdw$ perfectly. In this sense, detecting modifications of a single designated string is easy. The difficulty arises once we also require the decoder to behave well on \emph{all} other strings that might be presented to it. The definitions below make this requirement precise.

Formally, a zero-bit tamper-detection code is specified by a triple of polynomial-time algorithms
\[
\Gamma = (\KGen, \Encode, \Decode)
\]
over an alphabet $\alphabet$ and a codeword space $\alphabet^\cdwlen$:
\begin{description}
\item[$\KGen(1^\secpar)$:] On input the security parameter $\secpar$, the randomized key-generation algorithm outputs a secret key $\sk$.
\item[$\Encode(\sk)$:] On input $\sk$, the (possibly randomized) encoding algorithm outputs a codeword $\cdw \in \alphabet^\cdwlen$.
\item[$\Decode(\sk,\cdw)$:] On input the secret key $\sk$ and a string $\cdw \in \alphabet^\cdwlen$, the deterministic decoding algorithm outputs a symbol in $\{\valid,\invalid,\tampered\}$.
\end{description}

The first requirement is that honestly generated codewords are always accepted.

\begin{defn}[Correctness of zero-bit tamper-detection codes]\label{def:corr-no-msg}
A zero-bit tamper-detection code $\Gamma$ satisfies \emph{correctness} if, for every $\secpar \in \NN$ and every secret key $\sk \in \supp(\KGen(1^{\secpar}))$, we have
\[
\Prob\bigl[\Decode(\sk, \Encode(\sk)) = \valid\bigr] = 1,
\]
where the probability is taken over the internal randomness of $\Encode$ (if any). In particular, if $\Encode$ is deterministic then $\Decode(\sk,\Encode(\sk)) = \valid$ holds for all such $\sk$.
\end{defn}

To capture security, we add two further conditions. The first is \emph{soundness}. Here we fix an arbitrary string $\hat\cdw \in \alphabet^\cdwlen$ that is chosen independently of the key. Soundness requires that, over the random choice of $\sk$, the decoder almost never classifies $\hat\cdw$ as anything other than $\invalid$. Intuitively, for a random key the set of strings that the decoder considers valid or tampered should occupy only a small fraction of the ambient space.

\begin{defn}[Soundness of zero-bit tamper-detection codes]\label{def:sound-no-msg}
We say that a zero-bit tamper-detection code $\Gamma$ satisfies \emph{soundness} if for every fixed string $\hat\cdw \in \alphabet^\cdwlen$,
\[
\Prob\left[\Decode(\sk,\hat\cdw) \ne \invalid : \sk \getsr \KGen(1^\secpar)\right] \le \negl(\secpar).
\]
\end{defn}

For our lower bounds we will also use a relaxed form of this requirement. Rather than insisting that the misclassification probability be negligible, we allow it to be bounded by a fixed constant \(c\). From the point of view of applications, such a guarantee is very weak. For example, if \(c = 1/2\), then in the worst case the decoder can do no better than a fair coin flip on a fixed string. The relaxation is useful for the structure of our result: the class of schemes that satisfy negligible soundness is contained inside the larger class that satisfy constant-bounded soundness for some \(c < 1\). If we can rule out robust tamper detection for the larger class, then we have automatically ruled it out for the smaller class as well.

\begin{defn}[Constant-bounded soundness of zero-bit tamper-detection codes]\label{def:coin-sound}
Let $c \in (0,1]$. We say that a zero-bit tamper-detection code $\Gamma$ satisfies \emph{constant-bounded soundness} (with parameter $c$) if for every fixed string $\hat\cdw \in \alphabet^\cdwlen$,
\[
\Prob\left[\Decode(\sk,\hat\cdw) \ne \invalid : \sk \getsr \KGen(1^\secpar)\right] \le c.
\]
If $c = \tfrac{1}{2}$, we say that $\Gamma$ satisfies \emph{coin-flipping soundness}, in the sense that checking the validity of a worst-case string is no better than flipping a fair coin.
\end{defn}

The second security condition concerns the effect of tampering on an honestly generated codeword. We model tampering by a family $\cF = \{f:\alphabet^\cdwlen\rightarrow\alphabet^\cdwlen\}$ of (possibly randomized) functions. For a given key, the encoder produces $\cdw \getsr \Encode(\sk)$ and the adversary applies some $f \in \cF$ to obtain $\tilde\cdw$. Tamper detection requires that, except with negligible probability, the decoder outputs $\tampered$ whenever $\tilde\cdw$ differs from $\cdw$.

\begin{defn}[$\cF$-tamper detection of zero-bit tamper-detection codes]\label{def:tamper-no-msg}
We say a zero-bit tamper-detection code $\Gamma$ satisfies \emph{$\cF$-tamper detection} if for every $f\in\cF$,
\ifconf
    \[
    \Prob\left[\Decode(\sk,\tilde\cdw) \ne \tampered \wedge \tilde\cdw \ne \cdw\right] \leq \negl(\secpar),
    \]
    where $\sk\getsr\KGen(1^\secpar)$, $\cdw \getsr \Encode(\sk)$, and $\tilde\cdw \getsr f(\cdw)$.
\else
    \[
    \Prob\left[\Decode(\sk,\tilde\cdw) \ne \tampered \wedge \tilde\cdw \ne \cdw :
    \sk\getsr\KGen(1^\secpar);\; \cdw \getsr \Encode(\sk);\; \tilde\cdw \getsr f(\cdw)\right] \le \negl(\secpar).
    \]
\fi
\end{defn}

Correctness, soundness, and $\cF$-tamper detection together describe the behavior we will require from the coding layer underlying cryptographic watermarking.

\subsection{Relation to PRCs}
Christ and Gunn~\cite{C:ChrGun24} formalize cryptographic watermarking in terms of pseudorandom error-correcting codes (PRCs). A PRC is a keyed coding scheme that both corrects a bounded amount of noise and hides its codewords from anyone who does not know the key. Since our lower bounds are meant to apply to PRC-based watermarking, it is natural to ask why we did not state them directly in that language.

The main reason is that we work in the information theoretic setting to prove the negative result.\footnote{Despite this, the negative result can be naturally extended even to computational security.} It depends only on how many symbols an adversary can change and on how the decoder responds to corrupted strings. The pseudorandomness of the codeword, which is central to undetectability, plays no role in this counting argument. Zero-bit tamper-detection codes isolate exactly the pieces we need: a keyed encoder, a decoder that distinguishes $\valid$, $\invalid$, and $\tampered$, and quantitative guarantees for soundness and tamper detection with respect to a family $\cF$ of tampering functions. We do not constrain the distribution of codewords beyond what is needed for these properties.

Viewed this way, zero-bit tamper-detection codes form a weaker abstraction than PRCs. A zero-bit PRC must in addition make its codewords pseudorandom and is usually required to have negligible decoding error for every message. Any impossibility theorem that already rules out robust zero-bit tamper detection under these minimal assumptions therefore applies a fortiori to zero-bit PRCs that satisfy stronger guarantees. Working with the weaker interface keeps the statement of the lower bound focused on the two properties that matter for our application (soundness and tamper detection) while still covering the PRC constructions used in cryptographic watermarking.

At the same time, in the zero-bit setting the coding content of the two formalisms is essentially the same once we impose standard soundness. There is only a single message $\msg$ to recover. Given a zero-bit tamper-detection code $\Gamma = (\KGen,\Encode,\Decode)$ that satisfies soundness in the sense of \Cref{def:sound-no-msg} and $\cF$-tamper detection in the sense of \Cref{def:tamper-no-msg}, we can view it as a (non-pseudorandom) zero-bit error-correcting code for the fixed message $\msg$ by using $\Encode$ as the encoder and defining the decoder to output $\msg$ on any input that is not classified as $\invalid$. Conversely, for the purposes of our lower bound, any zero-bit PRC with negligible decoding error in a given tampering model \blue{induces a degenerate tamper-detection code by outputting $\tampered$ on successful decodings and $\invalid$ otherwise. This induced code need not satisfy correctness}, since it may never output $\valid$, but it preserves the soundness and tamper-detection guarantees relevant to the lower bound up to the usual negligible losses.

Thus, if we strip away pseudorandomness, zero-bit tamper-detection codes and zero-bit PRCs express the same coding-theoretic limitations in the regime we study. Our lower bound for zero-bit tamper-detection codes immediately carries over to zero-bit PRCs whose decoder is required to be robust against the same independent symbol tampering model. Adding pseudorandomness on top can improve undetectability of the watermark, but it cannot move the underlying coding limit.

Finally, the limit we obtain is tight for this abstraction. In Appendix~\ref{app:construction} we describe a simple information-theoretic construction of a zero-bit tamper-detection code that tolerates any independent tampering rate strictly below the threshold identified by our bound. This matching construction shows that the impossibility is not an artifact of the definition: it captures the exact corruption level at which robust decoding, and hence robust PRC-based watermarking, becomes information-theoretically impossible.



\subsection{Impossibility of High Tampering Rates}\label{sec:limit-no-msg}

A central question for zero-bit tamper-detection codes is how robust they can be to tampering. If an adversary is allowed to modify a fraction of the symbols in a codeword, can the decoder still reliably detect that tampering has occurred while maintaining soundness on arbitrary strings? In this subsection we show that these two requirements come into fundamental conflict once the tampering rate is high enough.

To build some intuition, consider an extreme case in which the tampering family contains all constant functions: for every fixed string $\hat\cdw \in \alphabet^\cdwlen$, there is a function $f_{\hat\cdw}$ that ignores its input and always outputs $\hat\cdw$. Soundness says that, over the choice of the key, the decoder should almost always label $\hat\cdw$ as $\invalid$. On the other hand, if $\hat\cdw$ can arise as $f_{\hat\cdw}(\cdw)$ for an honestly generated codeword $\cdw$, tamper detection would like the decoder to label $\hat\cdw$ as $\tampered$. Since the decoder can assign only one label to each string, it cannot satisfy both requirements for the same $\hat\cdw$. This already suggests that strong tamper detection is hard to reconcile with soundness once we allow very powerful tampering.

Our main result shows that a similar tension appears even for a much weaker, randomized tampering model in which each symbol is modified independently. Let $\alphabet$ be an alphabet of size $q>1$, and let $n$ denote the codeword length. We focus on the following independent tampering rule: for each input $\cdw \in \alphabet^n$, the adversary produces $\tilde\cdw$ by choosing each symbol of $\tilde\cdw$ independently and uniformly from $\alphabet$, conditional on $\cdw$. For each position, the new symbol equals the original one with probability $1/q$ and differs from it with probability:
\[
1 - \frac{1}{q}.
\]
\blue{Call this randomized tampering function $f_{\mathrm{unif}}$.} Thus the expected fraction of changed positions is $1 - 1/q$. For any fixed $\delta \in (0,1)$, a Chernoff bound gives
\begin{align}
\Prob\!\left[\dist(\cdw,f_{\mathrm{unif}}(\cdw)) > \Bigl(1 - \frac{1}{q}\Bigr)(1+\delta)\,n\right] \nonumber \\
\le \exp\!\left(- \frac{\delta^2}{3} n \Bigl(1-\frac{1}{q}\Bigr)\right)\label{eq:validity}
\end{align}
for every fixed $\cdw \in \alphabet^n$. 
For the theorem below, it suffices that \blue{$\cF^{\mathrm{ind}}_{\alpha n}$ be any tampering family that contains $f_{\mathrm{unif}}$, where $\alpha = \Bigl(1 - \frac{1}{q}\Bigr)(1+\delta)$; according to~\Cref{eq:validity}, $f_{\mathrm{unif}} \in \cF^{\mathrm{ind}}_{\alpha n}$ holds with overwhelming probability when considering sufficiently long codewords (i.e., large values of $n$).}

\blue{For simplicity, we state the main impossibility theorem below under the assumption that $f_{\mathrm{unif}} \in \cF^{\mathrm{ind}}_{\alpha n}$. Thus, it holds except with the probability defined in~\Cref{eq:validity}.}

\begin{thm}\label{thm:impossible}
    Let $\Gamma$ be any zero-bit tamper-detection code over $\alphabet$ with codeword length $n \geq 1$ and $|\alphabet| = q > 1$. Suppose $\Gamma$ satisfies \blue{$\cF^{\mathrm{ind}}_{\alpha n}$-tamper detection for some tampering family $\cF^{\mathrm{ind}}_{\alpha n}$ that contains $f_{\mathrm{unif}}$}, where
    \[
      \alpha = \Bigl(1 - \frac{1}{q}\Bigr)(1+\delta)
    \]
    with $\delta \in (0,1)$. Then $\Gamma$ cannot satisfy soundness.
\end{thm}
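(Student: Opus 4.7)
The plan is to isolate one specific tampering function in $\cF^{\mathrm{ind}}_{\alpha n}$ whose output distribution, on \emph{every} input codeword, is uniform on $\alphabet^n$, and then derive a contradiction: tamper detection forces $\Decode$ to output $\tampered$ on a uniformly random input with overwhelming probability, while soundness forces $\Decode$ to output $\invalid$ on every fixed input, and hence by averaging on a uniformly random one as well. Since $\tampered \ne \invalid$ as decoder labels, the two requirements cannot coexist.

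The key observation is a per-coordinate calculation. For the $f$ described right before the theorem, conditioning on the input symbol $c_i = a$, the tampered symbol $\tilde c_i$ equals $a$ with probability $1/q$ and equals any fixed $b \ne a$ with probability $(1-1/q)\cdot\frac{1}{q-1} = \frac{1}{q}$; so $\tilde c_i$ is uniform on $\alphabet$ independently of $a$, and because positions are tampered independently, $f(\cdw)$ is uniform on $\alphabet^n$ for every $\cdw$. The expected Hamming distance is exactly $(1-1/q)n < \alpha n$, so by a standard Chernoff bound the truncated variant $f'$ that outputs $f(\cdw)$ whenever $\dist(\cdw,f(\cdw)) \le \alpha n$ and otherwise outputs $\cdw$ differs from $f$ in statistical distance at most $e^{-\Omega(\delta^2 n)}$ and genuinely lies in $\cF^{\mathrm{ind}}_{\alpha n}$.

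Next I would apply tamper detection (\Cref{def:tamper-no-msg}) to $f'$, absorbing the $q^{-n}$ event $\{\tilde\cdw = \cdw\}$ into the negligible error, to conclude
\[
\Pr_{\sk \getsr \KGen(\secparam),\; \hat\cdw \getsr \alphabet^n}\bigl[\Decode(\sk,\hat\cdw) = \tampered\bigr] \;\ge\; 1 - \negl(\secpar).
\]
A Markov-style averaging then produces a single fixed $\hat\cdw^\star \in \alphabet^n$ for which $\Pr_\sk[\Decode(\sk,\hat\cdw^\star) = \tampered] \ge 1 - \sqrt{\negl(\secpar)}$. Since $\tampered \ne \invalid$, this contradicts soundness (\Cref{def:sound-no-msg}) applied at $\hat\cdw^\star$, which demands $\Pr_\sk[\Decode(\sk,\hat\cdw^\star) \ne \invalid] \le \negl(\secpar)$.

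I expect the uniformity observation to be the main conceptual step: once one sees that the tampering family in the theorem is precisely calibrated so that $f(\cdw)$ is distributed uniformly on $\alphabet^n$ independently of $\cdw$, the rest is mechanical. The only mild technicality is the Chernoff truncation used to pass from the in-expectation formulation to a function that strictly respects the $\alpha n$ Hamming budget, and this is exactly where the slack $\delta > 0$ is spent; without it the expected fraction of modifications would sit exactly at the threshold and the truncation tail would no longer be negligible.
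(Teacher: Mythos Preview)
Your proposal is correct and follows essentially the same approach as the paper: identify that the specified per-coordinate tampering function outputs a uniformly random string regardless of the input codeword, use Chernoff to control the Hamming budget, and conclude that tamper detection forces $\Decode(\sk,\hat\cdw)=\tampered$ on a uniform $\hat\cdw$ while soundness forces $\invalid$. The only cosmetic differences are that the paper averages soundness over $\hat\cdw$ rather than extracting a single $\hat\cdw^\star$ from the tamper-detection side, and it does not bother with the explicit truncation $f'$ (it simply notes that $f$ respects the $\alpha n$ budget except with negligible probability); your averaging could in fact yield $1-\negl(\secpar)$ directly rather than $1-\sqrt{\negl(\secpar)}$, but this is harmless.
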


\begin{proof}
    Suppose, for contradiction, that such a code $\Gamma$ exists.

\blue{Let $\sk \getsr \KGen(1^\secpar)$ and let $\cdw = \Encode(\sk)$ be an honestly generated codeword. Let $\cdw' \getsr f_{\mathrm{unif}}(\cdw)$.}

\blue{By definition of $f_{\mathrm{unif}} \in \cF^{\mathrm{ind}}_{\alpha n}$, the random variable $\cdw'$ is distributed exactly as a uniformly random string in $\alphabet^n$.}

By the soundness property, for every fixed $\hat\cdw \in \alphabet^n$,
\[
\Prob\left[\Decode(\sk, \hat\cdw) \neq \invalid \right] \leq \negl(\secpar),
\]
where the probability is over $\sk \getsr \KGen(1^\secpar)$. In particular, this bound also holds when $\hat\cdw$ is chosen uniformly at random, so for a randomly chosen $\hat\cdw \getsr \alphabet^n$ we still have $\Prob\left[\Decode(\sk, \hat\cdw) \neq \invalid \right] \leq \negl(\secpar)$.

\blue{Moreover, since $f_{\mathrm{unif}} \in \cF^{\mathrm{ind}}_{\alpha n}$, tamper detection gives}
\[
\blue{
\Prob\left[\Decode(\sk, \cdw') \neq \tampered \wedge \cdw' \neq \cdw \right] \leq \negl(\secpar),
}
\]
where the probability is over $\sk \getsr \KGen(1^\secpar)$, $\cdw \getsr \Encode(\sk)$, and the randomness of $f_{\mathrm{unif}}$. \blue{Since $\cdw'$ is uniform over $\alphabet^n$, we have $\Prob[\cdw' = \cdw] = q^{-n}$. Therefore,}
\[
\blue{
\Prob\left[\Decode(\sk, \cdw') = \tampered \right] \geq 1 - \negl(\secpar) - q^{-n}.
}
\]

Therefore, under the joint distribution of $(\sk,\cdw')$ with $\sk \getsr \KGen(1^\secpar)$ and $\cdw' \getsr \alphabet^n$, we have both
\[
\blue{\begin{aligned}
\Prob\left[\Decode(\sk, \cdw') = \invalid\right] &\geq 1 - \negl(\secpar),\\
\Prob\left[\Decode(\sk, \cdw') = \tampered\right] &\geq 1 - \negl(\secpar) - q^{-n},
\end{aligned}}
\]
\blue{These events are disjoint. Since $q^{-n} \leq 1/2$ for every $q>1$ and $n \geq 1$, these two lower bounds sum to more than $1$ for all sufficiently large $\secpar$, which is impossible.}

We conclude that no zero-bit tamper-detection code can be sound and $\cF^{\mathrm{ind}}_{\alpha n}$-tamper-detecting at tampering rates $\alpha \geq (1-1/q)(1+\delta)$.

\end{proof}

In words, once we ask for tamper detection against independent symbol modifications that, with high probability, corrupt slightly more than a $(1-1/q)$ fraction of the positions, there is no way to keep the decoder sound on arbitrary fixed strings. The value
\[
\alpha^* = 1 - \frac{1}{q}
\]
is the critical corruption rate in this model: below it, robust tamper detection is achievable (as we show with a matching construction), while above it, soundness and tamper detection cannot coexist.

In the binary case the statement takes a particularly clean form.

\begin{coro}\label{coroll:impossible-bits}
    Let $\Gamma$ be any zero-bit tamper-detection code with alphabet $\bin$ and codeword space $\bin^n$ for $n \geq 1$. If $\Gamma$ satisfies \blue{$\cF^{\mathrm{ind}}_{\alpha n}$-tamper detection for some tampering family $\cF^{\mathrm{ind}}_{\alpha n}$ that contains $f_{\mathrm{unif}}$}, and for
    \[
       \alpha = \frac{1+\delta}{2}
    \]
    for some $\delta \in (0,1)$, then $\Gamma$ cannot also satisfy soundness.
\end{coro}

Notice that these impossibility results do not rely on correctness: nowhere in the proof do we use the requirement that an honestly generated codeword must be accepted as $\valid$. Even a degenerate scheme that never outputs $\valid$ would trivially satisfy soundness (since all random strings are labeled $\invalid$) and might still claim to detect tampering. The theorem shows that, at high tampering rates, such a scheme also cannot satisfy the $\cF^{\mathrm{ind}}_{\alpha n}$-tamper detection property. This is why correctness is not assumed in our lower bounds.

\paragraph{Extension to weak soundness and variable length}
The same phenomenon persists even if we weaken soundness substantially. Recall the definition of constant-bounded soundness (\Cref{def:coin-sound}): for a parameter $c \in (0,1]$, the decoder is allowed to misclassify a fixed string as anything other than $\invalid$ with probability up to $c$ over the choice of the key. For the independent tampering model above and corruption rates
\[
\alpha = \Bigl(1 - \frac{1}{q}\Bigr)(1+\delta),
\]
one can show that if a code has constant-bounded soundness with parameter $c$, then for large $n$ its tamper-detection error under such tampering must be at least
\[
(1 - c)\Bigl(1 - \exp\bigl(-\tfrac{\delta^2}{3}(1-1/q)\,n\bigr)\Bigr),
\]
which tends to $1-c$ as $n$ grows. In particular, even a ``coin-flip'' decoder with $c = 1/2$ cannot hope to detect high-rate independent tampering with small error; beyond the threshold $\alpha^*$ it must fail with probability close to $1/2$ on tampered codewords.

Finally, the impossibility does not depend on fixing the blocklength in advance. Allowing the code to use different lengths $n$ on different inputs does not help: for each length $n$ we can define the corresponding independent tampering rule on $\alphabet^n$, and the same argument goes through. In this sense the barrier identified by \Cref{thm:impossible} is intrinsic to the combination of soundness and strong tamper detection, rather than an artifact of a particular choice of codeword length.

\ifconf
\subsection{Connecting Watermarking and Tamper Detection}\label{sec:water-tdc}
\else
\section{Connecting Watermarking and Tamper-detection Codes}\label{sec:water-tdc}
\fi
The fundamental limits we have established for zero-bit tamper-detection codes also govern the robustness of watermarking schemes for generative models. To make this connection precise, we show how any watermarking scheme that is sound and robust against tampering can be used to build a secret-key code with exactly the same security guarantees.

The intuition is straightforward: if a generative model can reliably embed a watermark that survives tampering, then by fixing a prompt and interpreting the model's outputs as codewords, we obtain a code that is robust to exactly the same set of manipulations. The code's decoder simply runs the watermark detector: if it finds a watermark, it signals tampering; otherwise, it outputs invalid.

For the formal theorem below, we restrict to models whose outputs have \blue{fixed length $n$}. In that case, the resulting code fits the \blue{fixed-length zero-bit tamper-detection interface} from \Cref{sec:tdc}.

\begin{thm}\label{thm:watermark-implies-tdc}
    Assume there exists a watermarking scheme $\Pi$ for a class of generative models \blue{$\cModel = \{\LLM:\cX\rightarrow\Sigma^n\}$} satisfying soundness and $\hat\cF$-robustness (for some family $\hat\cF$).\footnote{For the sake of clarity, we assume the prompt and output space of $\LLM$ is defined over the same alphabet. The result generalizes to different alphabets as well.} \blue{Fix any model $\LLM \in \cModel$ and any prompt $\overline{x} \in \cX$.} Then, there exists a zero-bit tamper-detection code $\Gamma$ with alphabet space $\alphabet$ and \blue{codeword space $\alphabet^n$} satisfying soundness and $\hat\cF$-tamper detection.
\end{thm}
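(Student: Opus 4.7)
The natural approach is a direct reduction: hard-wire an arbitrary generative model and prompt into the code, let the encoder sample from the watermarked model, and let the decoder reuse the watermark detector as a black-box. Since Theorem~\ref{thm:watermark-implies-tdc} only demands soundness and tamper detection (not correctness), the decoder is free to never output $\valid$, which eliminates the only subtle mismatch between the two primitives and keeps the reduction essentially trivial.

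Concretely, I would fix any $\LLM^\ast \in \cModel$ and any $x^\ast \in \Sigma^\ast$ once and for all (publicly, as part of the scheme's description). I would then define $\Gamma$ as follows. $\KGen(1^\secpar)$ runs $(\kappa,\LLM_\kappa) \getsr \Watermark(1^\secpar,\LLM^\ast)$ and sets $\sk = (\kappa,\LLM_\kappa)$; note $\LLM_\kappa$ is only needed for encoding, while decoding only needs $\kappa$. $\Encode(\sk)$ samples $\cdw \getsr \LLM_\kappa(x^\ast)$ and returns $\cdw \in \Sigma^\ast$. $\Decode(\sk,\cdw)$ invokes $b \gets \Detect(\kappa, x^\ast, \cdw)$ and outputs $\tampered$ if $b=\true$ and $\invalid$ otherwise.

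The two security properties then follow syntactically. For soundness, fix any $\hat\cdw \in \Sigma^\ast$; then
\[
\Prob\bigl[\Decode(\sk,\hat\cdw) \ne \invalid\bigr] = \Prob\bigl[\Detect(\kappa,x^\ast,\hat\cdw) = \true\bigr] \le \negl(\secpar)
\]
by Definition~\ref{def:sound-watermark} applied to $\LLM^\ast$, $x^\ast$, and $\hat\cdw$. For $\hat\cF$-tamper detection, fix any $f \in \hat\cF$; then, writing $\cdw \getsr \Encode(\sk)$ and $\tilde\cdw \getsr f(\cdw)$,
\[
\Prob\bigl[\Decode(\sk,\tilde\cdw) \ne \tampered \wedge \tilde\cdw \ne \cdw\bigr] = \Prob\bigl[\Detect(\kappa,x^\ast,\tilde\cdw) = \false \wedge \tilde\cdw \ne \cdw\bigr] \le \negl(\secpar)
\]
by the $\hat\cF$-robustness of $\Pi$ (Definition~\ref{def:robust-watermark}), because $\cdw$ is exactly the distribution of a watermarked output on prompt $x^\ast$.

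\textbf{Main obstacle.} There is no substantial technical difficulty; the only point to justify carefully is that the decoder is allowed to never output $\valid$. This is consistent with Definitions~\ref{def:sound-no-msg}-\ref{def:tamper-no-msg}, which do not presuppose correctness, and is the same subtlety flagged in Remark~\ref{rmk:impossible}. A secondary point worth stating explicitly is that the reduction is length-preserving: the resulting code has codeword space $\Sigma^\ast$ and inherits the exact tampering family $\hat\cF$ from the watermarking scheme, so the impossibility bounds from Theorem~\ref{thm:impossible} and Corollary~\ref{coroll:impossible-bits} transfer to watermarking without any loss in the tampering rate.
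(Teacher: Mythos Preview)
Your proposal is correct and essentially identical to the paper's own proof: fix a model and a prompt, let $\KGen$ run $\Watermark$, let $\Encode$ sample from the watermarked model on that prompt, and let $\Decode$ call $\Detect$ and map $\true\mapsto\tampered$, $\false\mapsto\invalid$ (never outputting $\valid$). The paper phrases the two security reductions by contradiction while you state them directly, but the argument and the key observation that correctness is not required are the same.
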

\begin{proof}
    \blue{We fix $\LLM \in \cModel$ and $\overline{x} \in \cX$ as in the theorem statement, and describe the resulting zero-bit tamper-detection code $\Gamma$.}
    \begin{description}
        \item[Key generation $\Gamma.\KGen(\secparam)$:] The key-generation algorithm outputs $\sk = (\kappa,\LLM_\kappa)$ where \blue{$(\kappa,\LLM_\kappa)\getsr \Pi.\Watermark(\secparam,\LLM)$}.
        \item[Encoding $\Gamma.\Encode(\sk)$:] The encoding algorithm outputs $\cdw = y \in\blue{\Sigma^{n}}$ where $y \getsr \LLM_\kappa(\overline{x})$.
        \item[Decoding $\Gamma.\Decode(\sk,\cdw)$:] The decoding algorithm lets $\cdw = y$ and runs $\Pi.\Detect(\kappa,\overline{x},y)$. If the result is $\false$, it outputs $\invalid$. If the result is $\true$, it outputs $\tampered$.
    \end{description}
    Note that the decoder never outputs $\valid$, and thus $\Gamma$ does not satisfy correctness (which is not required here).

    Let us first prove the soundness property. By contradiction, assume $\Gamma$ is not sound. Then, there exists some $\hat\cdw \in\blue{\Sigma^n}$ such that
    \ifconf
    \begin{align*}
    &\Prob\left[\Gamma.\Decode(\sk,\hat\cdw) \ne \invalid\right] =\\
    & \qquad \Prob\left[\Pi.\Detect(\kappa,\overline{x},\hat\cdw) =\true\right] \ge 1/\poly(\secpar),
    \end{align*}
    \else
    \[
    \Prob\left[\Gamma.\Decode(\sk,\hat\cdw) \ne \invalid\right] = \Prob\left[\Pi.\Detect(\kappa,\overline{x},\hat\cdw) = \true\right] \ge 1/\poly(\secpar),
    \]
    \fi
    where the probability is over the choice of the secret key. This violates the soundness of the watermarking scheme.

    It remains to prove tamper detection. By contradiction, assume $\Gamma$ does not satisfy $\hat\cF$-tamper detection. Then, there exists some \blue{(possibly randomized)} function $\hat f\in\hat\cF$ such that
    \ifconf
    \begin{align*}
        &\Prob\left[\Gamma.\Decode(\sk,\tilde\cdw) \ne \tampered \wedge \tilde\cdw \ne \cdw\right] =\\
        &\qquad \Prob\left[\Pi.\Detect(\kappa,\overline{x},\tilde\cdw) = \false \wedge \tilde\cdw \ne \cdw\right]  \ge 1/\poly(\secpar),
    \end{align*}
    \else
        \[
        \Prob\left[\Gamma.\Decode(\sk,\tilde\cdw) \ne \tampered \wedge \tilde\cdw \ne \cdw\right] = \Prob\left[\Pi.\Detect(\kappa,\overline{x},\tilde\cdw) = \false \wedge \tilde\cdw \ne \cdw\right]  \ge 1/\poly(\secpar),
        \]
    \fi
    where \blue{$\tilde\cdw \getsr \hat f(\cdw)$} and where the probability is over the choice of the secret key, the randomness used to generate $\cdw \getsr \LLM_\kappa(\overline{x})$, and \blue{the randomness of $\hat f$}. This violates robustness of the watermarking scheme.

\end{proof}

When the output of the generative model is a \blue{fixed-length binary string (i.e., $\alphabet = \bits$ and $\cY = \bits^n$)}, the connection above becomes especially powerful. By mapping any robust watermarking scheme to a zero-bit tamper-detection code, we can directly apply our main impossibility result for codes to watermarking itself.

Specifically, our earlier theorem (\Cref{thm:impossible}) shows that no zero-bit tamper-detection code can be sound and robust to tampering beyond a certain threshold. By combining this with the reduction above, we obtain an immediate impossibility for robust watermarking of binary outputs:

\begin{coro}\label{coroll:impossibility-watermarking}
Let $\cModel = \{\LLM: \cX \rightarrow \blue{\bits^n}\}$ be any class of generative models with \blue{fixed-length binary outputs}.  
There is no watermarking scheme for $\cModel$ that achieves both soundness and $\hat\cF^{\mathrm{ind}}_\alpha$-robustness for any tampering rate $\alpha \geq (1+\delta)/2$, for any $\delta \in (0,1)$.  
\blue{This holds even if the model's prompt space $\cX$ is arbitrary.}\footnote{\blue{For completeness, we highlight that it is possible to extend~\Cref{coroll:impossibility-watermarking} to watermarking schemes supporting models with arbitrary output length (i.e., $\cModel = \{\LLM: \cX \rightarrow \blue{\bits^*}\}$). This is because~\Cref{thm:impossible} holds even for tamper-detection codes with variable codeword length. We refer the reader to ``Extension to weak soundness and variable length'' paragraph in~\Cref{sec:limit-no-msg}}.}
\end{coro}

\blue{The same discussion also suggests an analogous limitation under a constant-bounded watermark soundness notion for any fixed $c \in (0,1)$, in the same informal sense discussed after \Cref{thm:impossible}.} In particular, this rules out even a ``useless'' watermarking scheme whose soundness is achieved only with a high constant probability, such as $c=1/2$.

The impossibility threshold we have identified for tamper-resilient codes is not just a technical detail; it serves as a universal limit for watermarking \blue{with fixed-length binary outputs}. \blue{Any such watermarking scheme} must face this same upper bound, no matter how the scheme is constructed or how prompts are chosen. The recent work of Christ and Gunn~\cite{C:ChrGun24} provides a compelling example: their PRC-based watermarking schemes achieve robustness against independent bit-flip tampering up to the threshold $\alpha < (1-\delta)/2$, and their approach even accommodates certain types of deletions. Our corollary shows that this rate is not only achievable but also optimal. Although we do not claim a perfect correspondence between all watermarking schemes and all PRC constructions, since this depends on the precise model and tampering family, the upper bound on robustness holds for all such schemes.

\newcommand{\figFirstAttacksContent}{
    \centering
    \begin{minipage}{0.9\textwidth}
        \centering
        
        \begin{subfigure}[b]{0.3\linewidth}
            \centering
            \includegraphics[width=\textwidth]{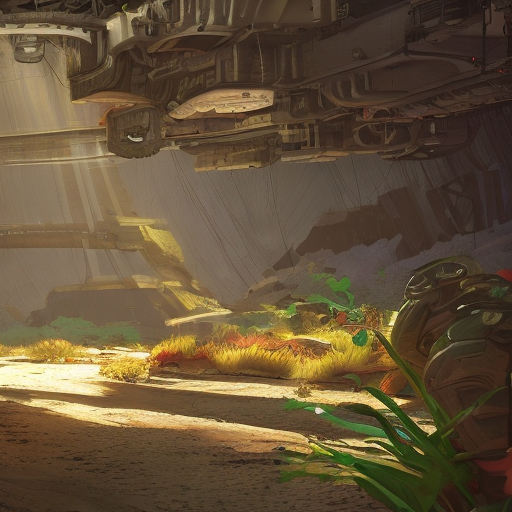}
            \caption{Original Image}
            \label{fig:five_images_a}
        \end{subfigure}
        \hfill
        \begin{subfigure}[b]{0.3\linewidth}
            \centering
            \includegraphics[width=\textwidth]{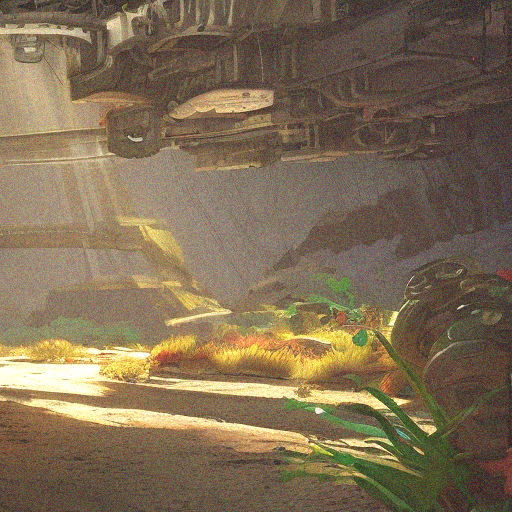}
            \caption{Color Attack: 23\% Error}
            \label{fig:five_images_b}
        \end{subfigure}
        \hfill
        \begin{subfigure}[b]{0.3\linewidth}
            \centering
            \includegraphics[width=\textwidth]{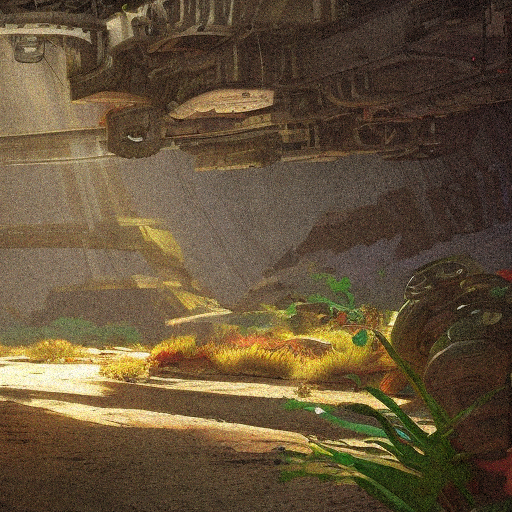}
            \caption{HSV Attack: 26\% Error}
            \label{fig:five_images_c}
        \end{subfigure}
        \hfill

        \vspace{0.5em}

        \hfill
        \begin{subfigure}[b]{0.48\linewidth}
            \centering
            \includegraphics[width=\textwidth]{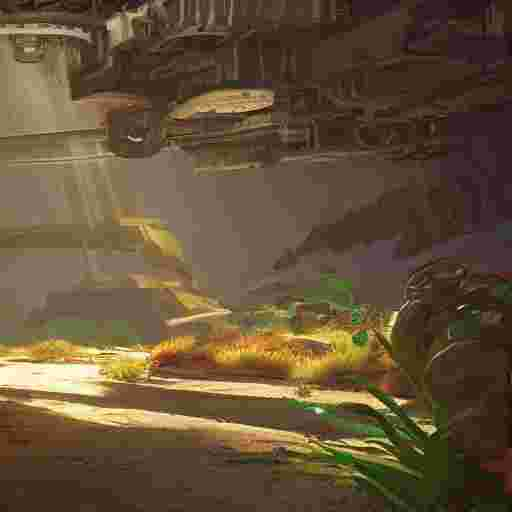}
            \caption{JPEG Attack: 32\% Error}
            \label{fig:five_images_d}
        \end{subfigure}
        \hfill
        \begin{subfigure}[b]{0.48\linewidth}
            \centering
            \includegraphics[width=\textwidth]{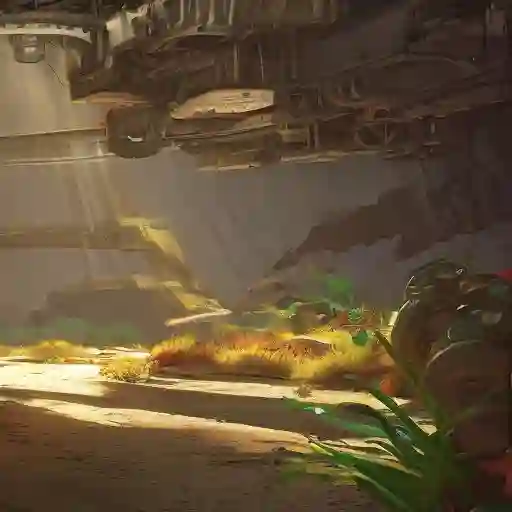}
            \caption{WebP Attack: 34\% Error}
            \label{fig:five_images_e}
        \end{subfigure}
        \hfill

        \caption{A series of images showing various attacks on a watermarked image, none of which removes the watermark}
        \label{fig:five_images_main}
    \end{minipage}
}

\newcommand{\figCropsContent}{
    \centering
    \begin{minipage}{0.8\textwidth} 
        \centering

        \begin{subfigure}[t]{0.24\linewidth}
            \centering
            \includegraphics[width=\linewidth]{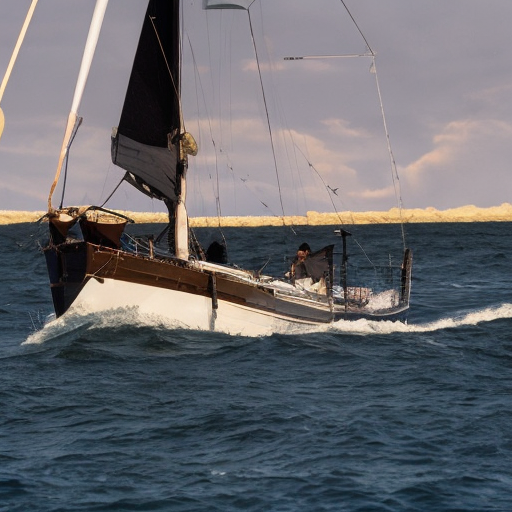}
            \caption{Original}
        \end{subfigure}
        \hfill
        \begin{subfigure}[t]{0.24\linewidth}
            \centering
            \includegraphics[width=\linewidth]{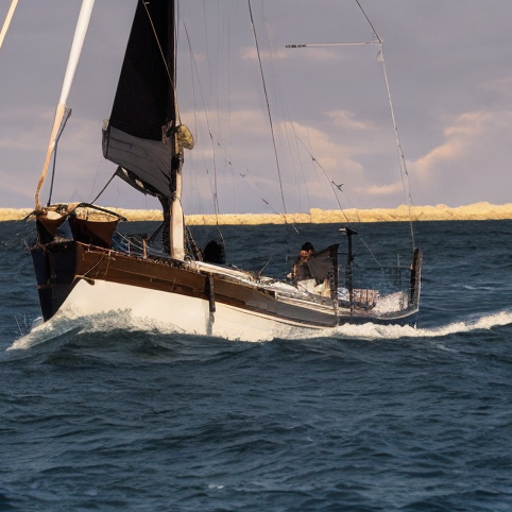}
            \caption{Crop \& Resize}
        \end{subfigure}
        \hfill
        \begin{subfigure}[t]{0.24\linewidth}
            \centering
            \includegraphics[width=\linewidth]{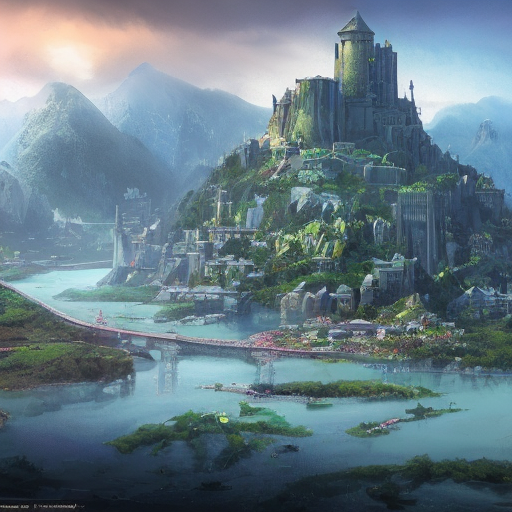}
            \caption{Original}
        \end{subfigure}
        \hfill
        \begin{subfigure}[t]{0.24\linewidth}
            \centering
            \includegraphics[width=\linewidth]{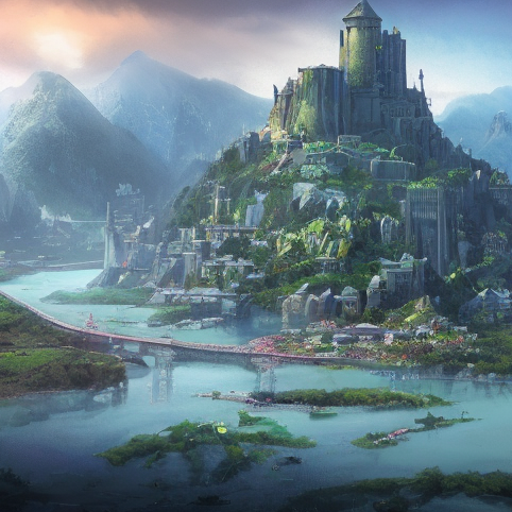}
            \caption{Crop \& Resize}
        \end{subfigure}

        \vspace{0.5em}

        \begin{subfigure}[t]{0.24\linewidth}
            \centering
            \includegraphics[width=\linewidth]{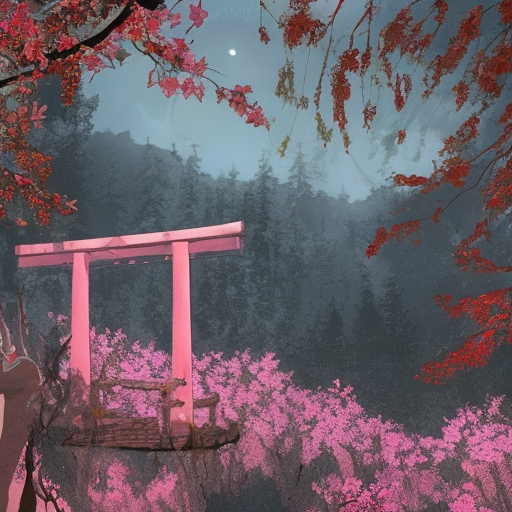}
            \caption{Original}
        \end{subfigure}
        \hfill
        \begin{subfigure}[t]{0.24\linewidth}
            \centering
            \includegraphics[width=\linewidth]{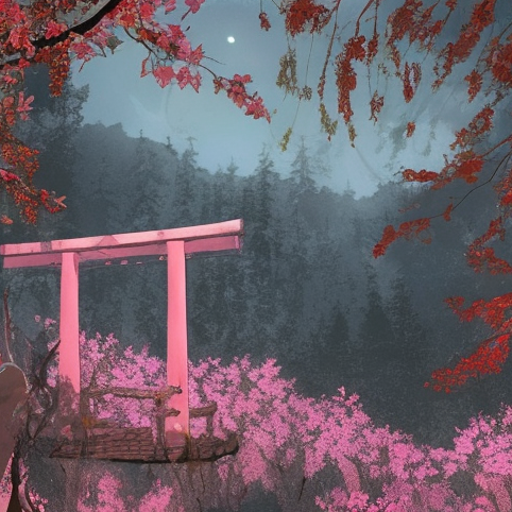}
            \caption{Crop \& Resize}
        \end{subfigure}
        \hfill
        \begin{subfigure}[t]{0.24\linewidth}
            \centering
            \includegraphics[width=\linewidth]{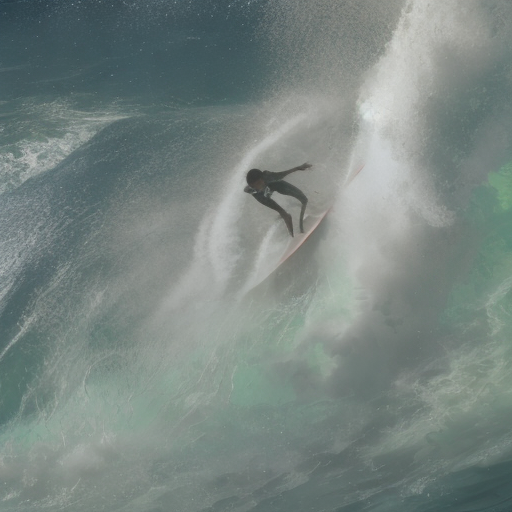}
            \caption{Original}
        \end{subfigure}
        \hfill
        \begin{subfigure}[t]{0.24\linewidth}
            \centering
            \includegraphics[width=\linewidth]{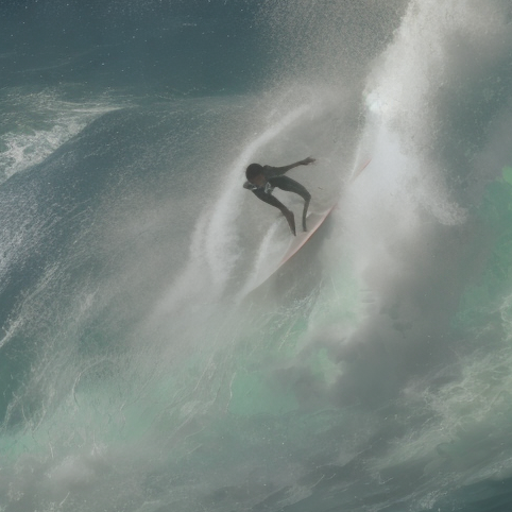}
            \caption{Crop \& Resize}
        \end{subfigure}

        \caption{Comparison between original and crop \& resize attacked images.}
        \label{fig:crop-resize-examples}
    \end{minipage}
}

\newcommand{\figLatentsContent}{
    \centering
    \includegraphics[width=0.9\textwidth]{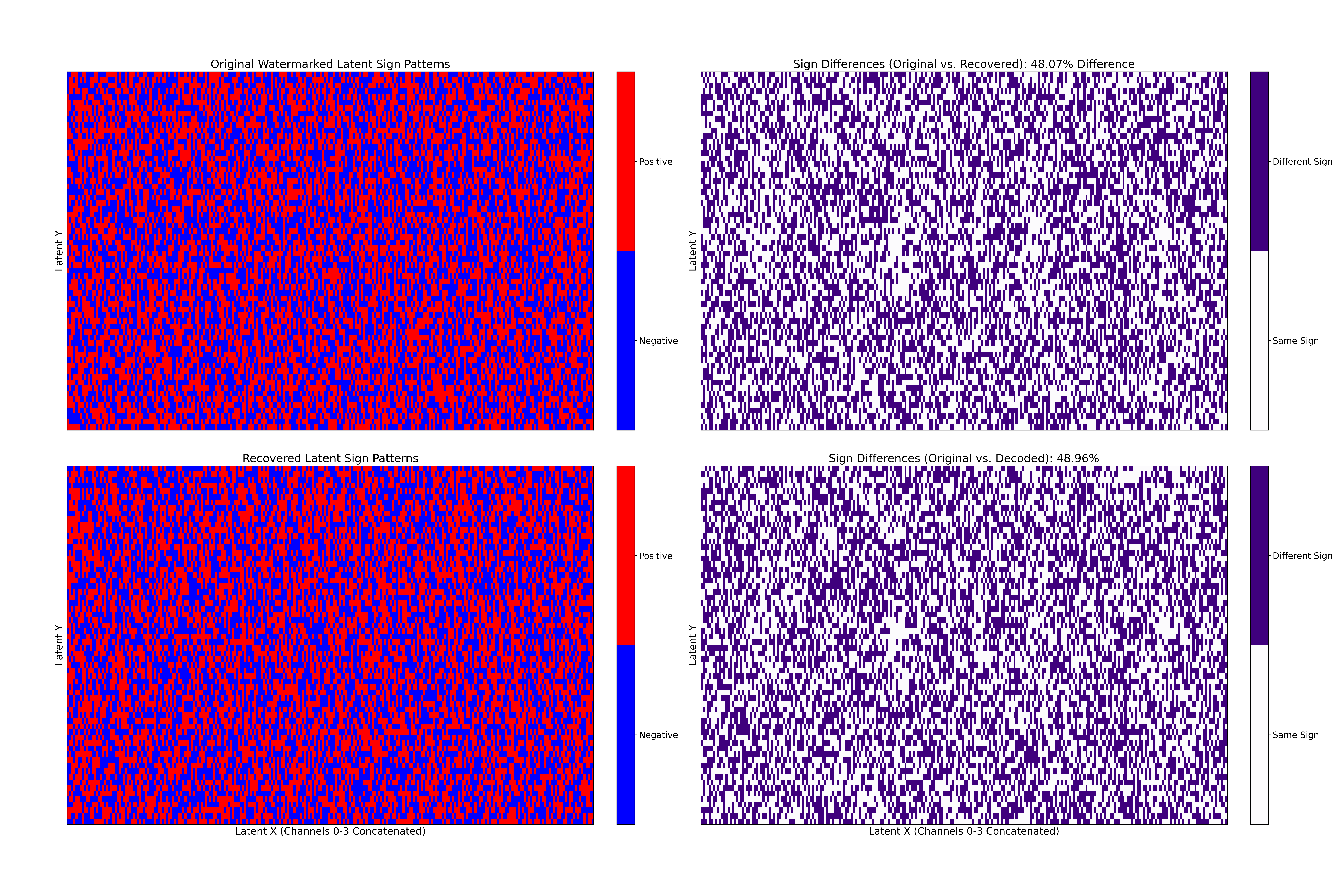}
    \caption{Latent sign analysis of a watermarked image before and after the crop and resize attack.
    Top left: the original latent sign pattern (red = positive, blue = negative) that encodes the watermark as a pseudorandom codeword.
    Bottom left: the latent recovered from the attacked image, which is visibly scrambled compared to the original.
    Top right: a difference map comparing the original and recovered latents, showing that $48.07\%$ of signs have flipped (purple = flipped, white = unchanged).
    Bottom right: a difference map after belief propagation, the error-correcting decoder, has tried to repair the errors.
    Because the error rate is already near the $50\%$ threshold, the decoder fails and converges to a different pseudorandom codeword.
    The resulting error rate increases slightly to $48.96\%$, confirming that the watermark cannot be recovered.}
    \label{fig:latent_sign_analysis_main}
}

\newcommand{\figVariantsContent}{
    \centering
    \begin{minipage}{0.9\textwidth}
        \centering

        \begin{subfigure}[t]{0.31\linewidth}
            \centering
            \includegraphics[width=\textwidth]{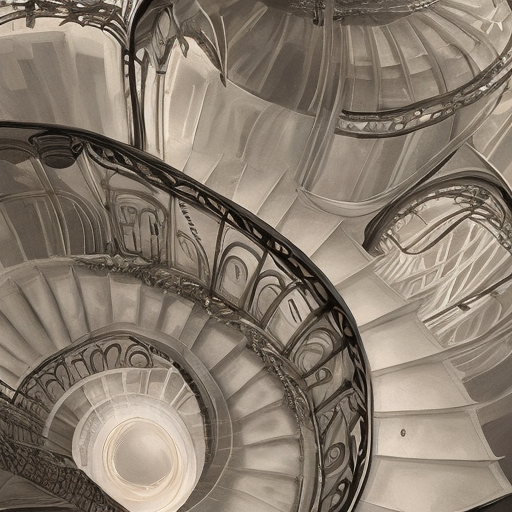}
            \caption{Original Image}
            \label{fig:orig}
        \end{subfigure}
        \hfill
        \begin{subfigure}[t]{0.31\linewidth}
            \centering
            \includegraphics[width=\textwidth]{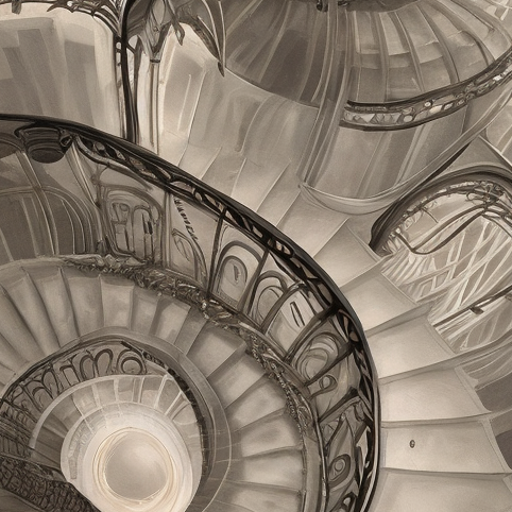}
            \caption{Crop \& Resize (47.2\% error)}
            \label{fig:crop-resize}
        \end{subfigure}
        \hfill
        \begin{subfigure}[t]{0.31\linewidth}
            \centering
            \includegraphics[width=\textwidth]{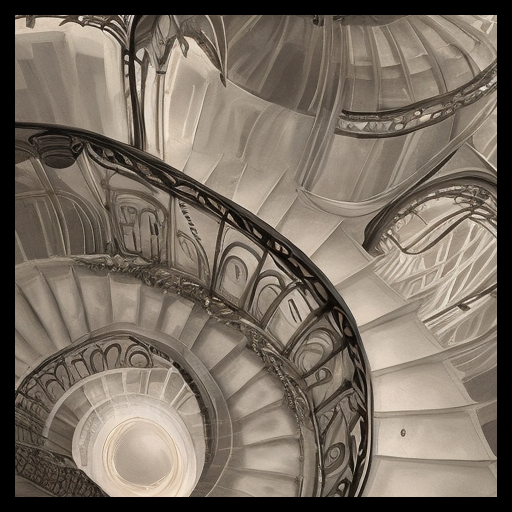}
            \caption{Crop \& Pad (16.7\% error)}
            \label{fig:crop-pad}
        \end{subfigure}
        
        \vspace{1em}

        \begin{subfigure}[t]{0.31\linewidth}
            \centering
            \includegraphics[width=\textwidth]{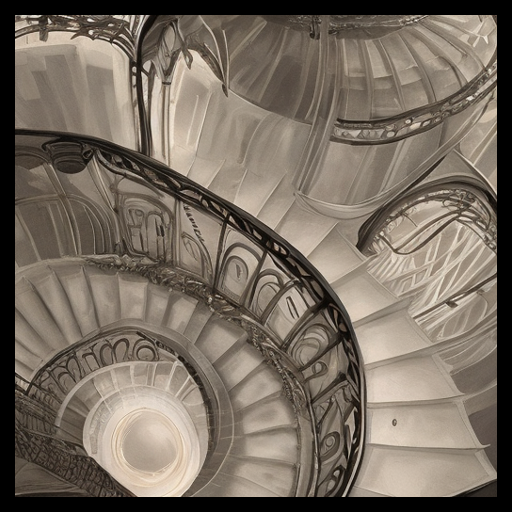}
            \caption{Downscale \& Pad (49.8\% error)}
            \label{fig:down-pad}
        \end{subfigure}
        \hspace{0.06\linewidth}
        \begin{subfigure}[t]{0.31\linewidth}
            \centering
            \includegraphics[width=\textwidth]{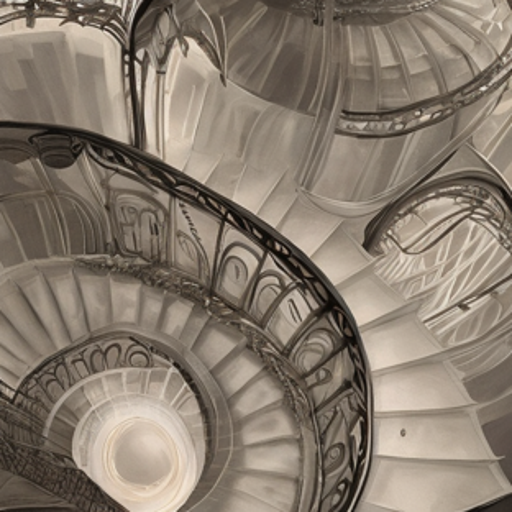}
            \caption{Downscale \& Upscale (12.1\% error)}
            \label{fig:down-up}
        \end{subfigure}

        \caption{Visualization of this branch of attacks (b-e) compared to the original (a)}
        \label{fig:multi_attack_visualization}
    \end{minipage}
}

\newcommand{\figLatentVariantsContent}{
    \centering
    \begin{minipage}{0.85\textwidth}
        \centering
        
        \begin{subfigure}[b]{0.48\linewidth}
            \centering
            \includegraphics[width=\linewidth]{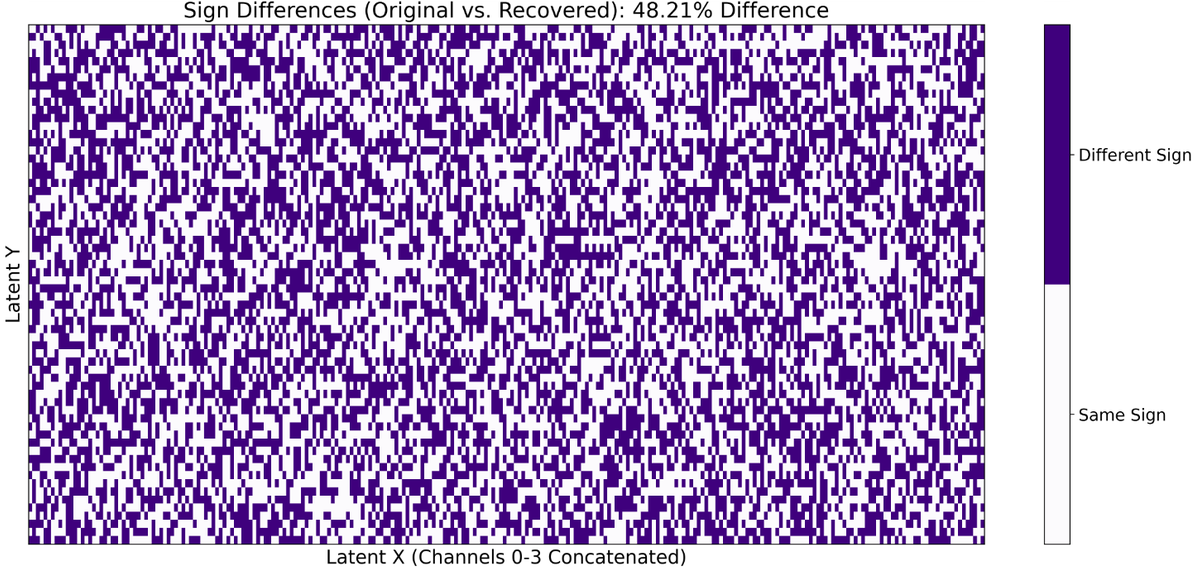}
            \caption{Crop \& Resize}
            \label{fig:sub_latent_a}
        \end{subfigure}
        \hfill
        \begin{subfigure}[b]{0.5\linewidth}
            \centering
            \includegraphics[width=\linewidth]{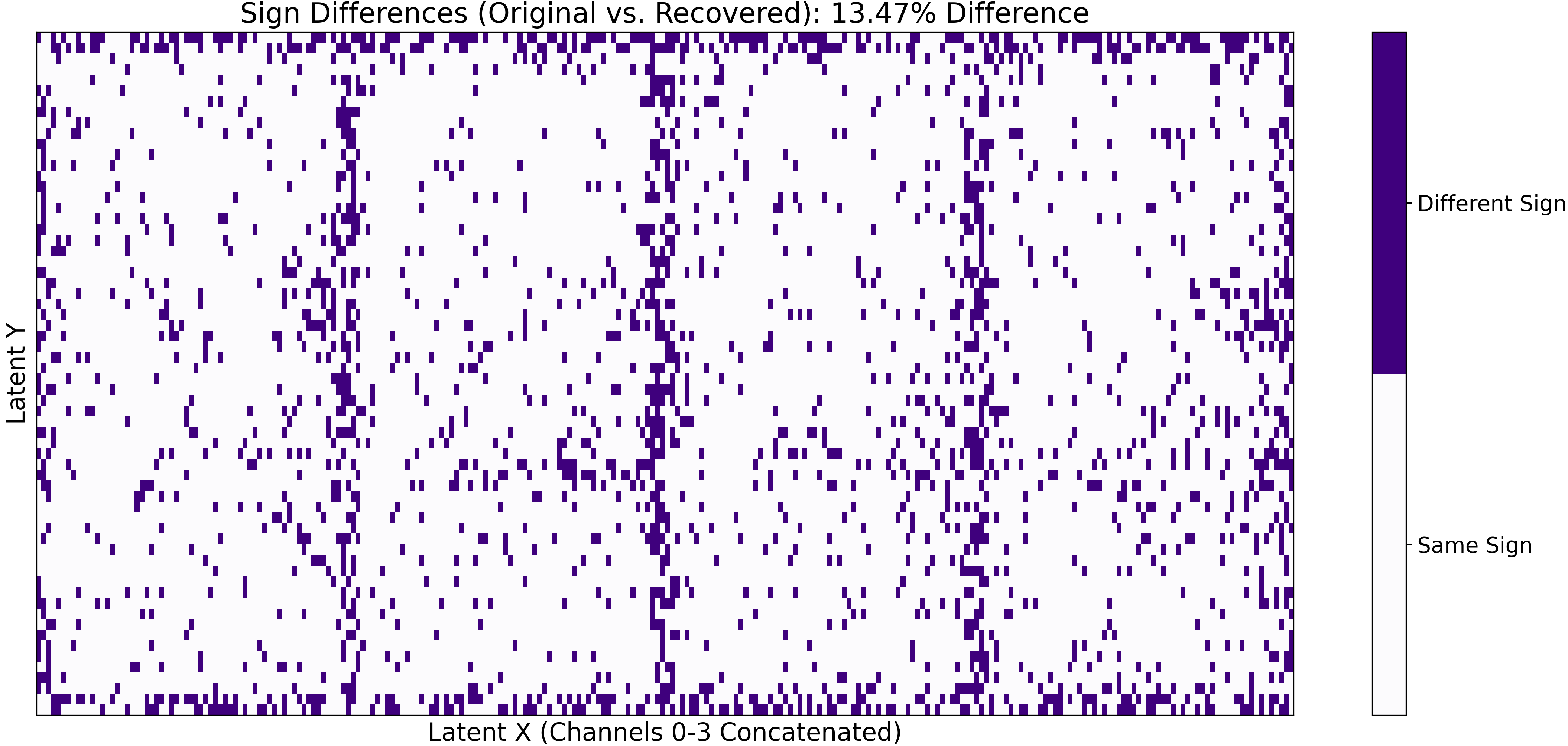}
            \caption{Crop \& Pad}
            \label{fig:sub_latent_b}
        \end{subfigure}

        \vspace{0.5em} 

        \begin{subfigure}[b]{0.5\linewidth}
            \centering
            \includegraphics[width=\linewidth]{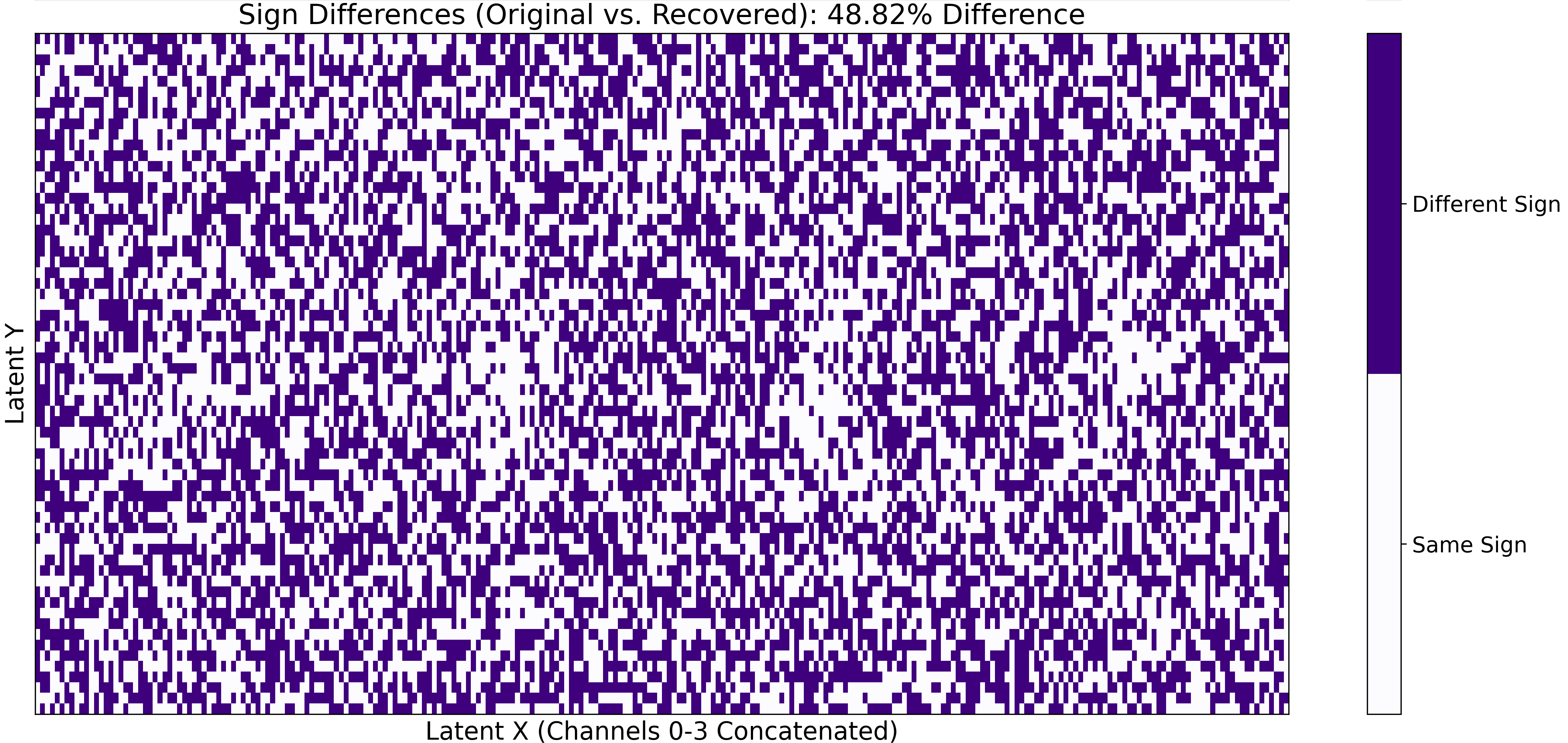}
            \caption{Downscale \& Pad}
            \label{fig:sub_latent_c}
        \end{subfigure}
        \hfill
        \begin{subfigure}[b]{0.48\linewidth}
            \centering
            \includegraphics[width=\linewidth]{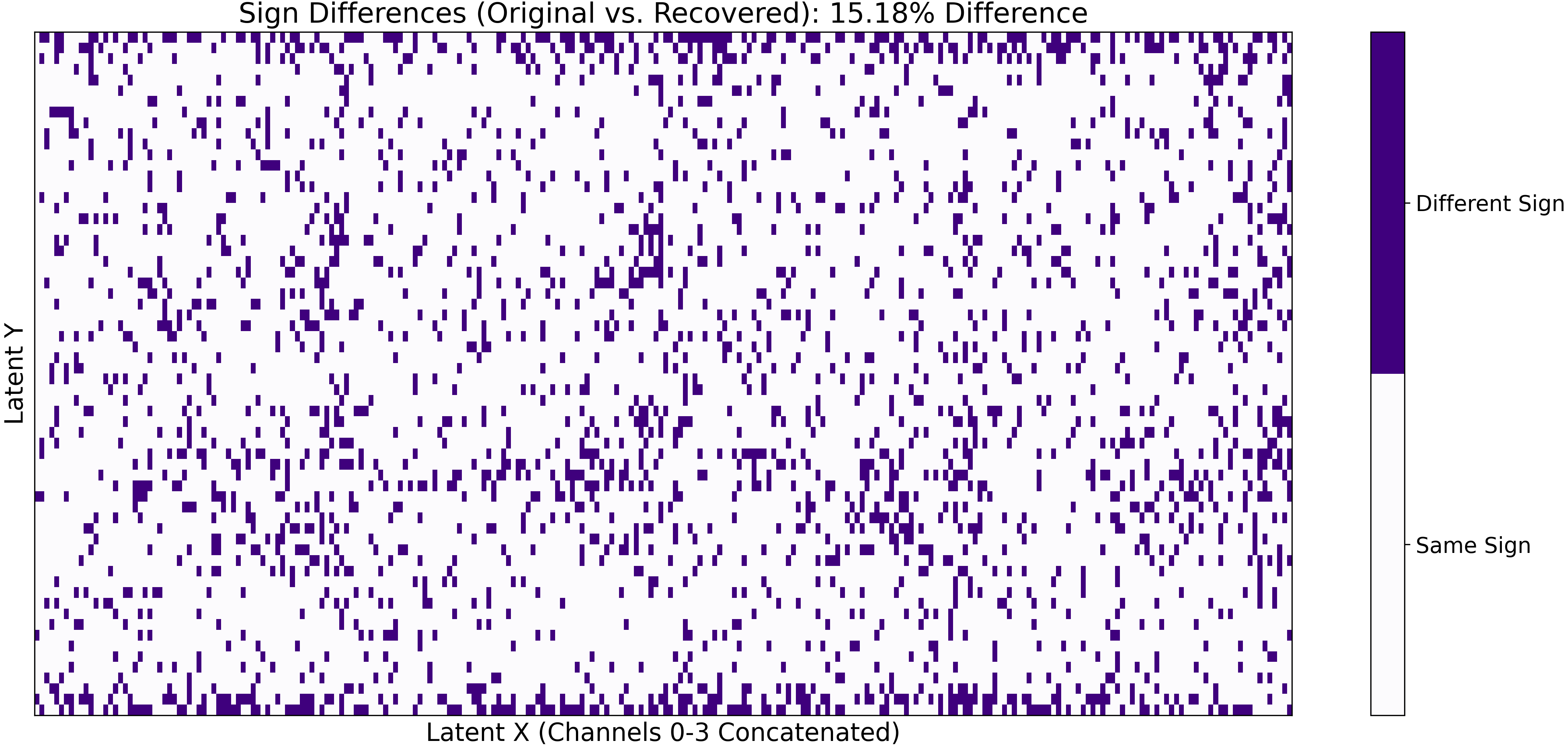}
            \caption{Downscale \& Upscale}
            \label{fig:sub_latent_d}
        \end{subfigure}

        \caption{Side-by-side comparison of latent space visualizations for the main attack and each variant}
        \label{fig:multi_attack_latent_analysis}
    \end{minipage}
}

\newcommand{\figBorderContent}{
    \centering
    \begin{minipage}{0.8\textwidth}
        \centering
        \begin{subfigure}[b]{0.48\linewidth}
            \centering
            \includegraphics[width=\linewidth]{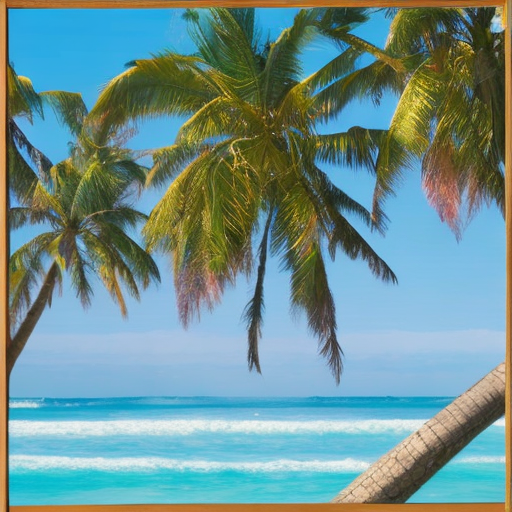}
            \caption{Original}
            \label{fig:original_border}
        \end{subfigure}
        \hfill
        \begin{subfigure}[b]{0.48\linewidth}
            \centering
            \includegraphics[width=\linewidth]{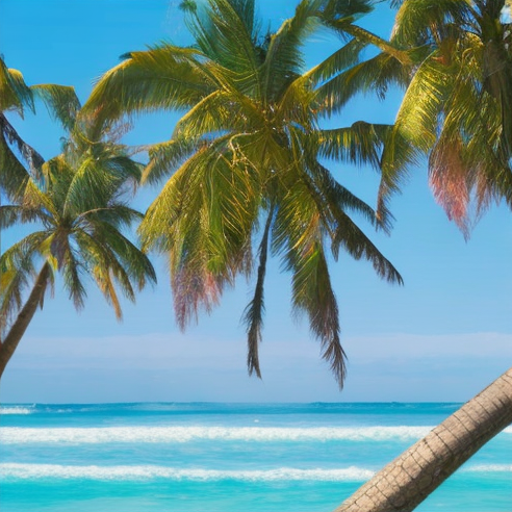}
            \caption{Cropped (10px)}
            \label{fig:attacked_border}
        \end{subfigure}
        
        \caption{Watermark removal on bordered image.}
        \label{fig:border_strategy}
    \end{minipage}
}

\section{A Concrete Attack}\label{sec:experiments}
\ifconf
    \begin{figure*}[t!]
        \figFirstAttacksContent
    \end{figure*}
\else
    \begin{figure}[t!]
        \figFirstAttacksContent
    \end{figure}
\fi

\ifconf
\else
\subsection{Setup}
\fi
Our goal is to test whether the theoretical robustness threshold we proved manifests in practice. 
For this purpose, we analyze the PRC watermark of Gunn, Zhao, and Song~\cite{gunn2025undetectable}, implemented in Stable Diffusion~2.1 Base (512$\times$512 resolution, 50 denoising steps). 
This watermark is representative of the state of the art: it preserves perceptual quality, is provably undetectable under standard assumptions, and resists a wide class of natural manipulations.

The scheme embeds a PRC (pseudorandom) codeword in the latent space. 
Let $\mathbf{v}\in \mathbb{R}^{4\times 64\times 64}$ denote the latent tensor produced by the diffusion model. 
\blue{For each entry $v_i$ of $\mathbf{v}$, the magnitude $|v_i|$ is preserved and the sign is set according to the corresponding codeword bit $\cdw_i\in\{0,1\}$, where $\cdw = \cdw_1||\ldots||\cdw_n$ is the bit decomposition of a PRC codeword.}
As a result, the watermarked latent still follows the Gaussian distribution required for high-quality image generation. 
The codeword is pseudorandom, so without the detection key the modified latent is indistinguishable from a fresh Gaussian sample.

\blue{Detection proceeds by first inverting a generated image $\hat{x}$ back into a latent $\hat{\mathbf{v}}$ using an approximate inversion procedure, and then extracting sign information from $\hat{\mathbf{v}}$.} 
Because inversion is lossy, $\hat{\mathbf{v}}$ typically agrees with the original latent on about $90\%$ of entries even in the absence of any edits. 
\blue{The scheme includes a detector and a separate belief-propagation (BP) decoding routine that attempts to recover the original codeword.} 
\blue{The key parameter in our experiments is the pre-decoding sign error rate, namely the fraction of latent coordinates whose sign differs between the original watermarked latent $\mathbf{v}$ and the latent $\hat{\mathbf{v}}$ recovered from the attacked image before decoding. The latent dimension is $n = 4 \cdot 64 \cdot 64 = 16384$.}
As reported in~\cite{gunn2025undetectable}, the BP decoder is probabilistic and exhibits variable robustness: it sometimes recovers the codeword under substantial perturbations, but overall it is less reliable than the detector, which remains robust across a wide range of attacks. 
In our experiments we observe that once the pre-decoding error rate approaches one-half of the latent positions, the BP decoder fails consistently.  

This threshold serves as the reference point for our experiments. 
The central question is whether simple image manipulations can drive the recovered latent to the $50\%$ error mark, thereby erasing the watermark.

\subsection{First Attempts}
Before developing our attack, we tested the robustness of~\cite{gunn2025undetectable} under a broad set of common image manipulations chosen to preserve image quality (including gaussian noise, gaussian blur, pixel-wise color shifts, and lossy format conversions).

\ifconf
\paragraph{Image manipulations}
\else
\paragraph{Image manipulations.}
\fi
Gaussian noise, Gaussian blur, HSV perturbations, and attribute adjustments (saturation, hue, exposure, contrast) were applied to watermarked images. 
Even when these visibly degraded the images, the pre-decoding error rate never exceeded $26\%$, and BP reduced the error to below $1\%$. 
The watermark remained detectable in every case.

\ifconf
\paragraph{Pixel-wise color shifts}
\else
\paragraph{Pixel-wise color shifts.}
\fi
Next, we modified each pixel by adding fixed or randomized RGB offsets. 
A uniform shift of $(10,0,0)$, making the image slightly redder, produced under $10\%$ error. 
A large uniform shift of $(75,75,75)$, which washed out the image, raised the error to only $12\%$. 
Randomized RGB perturbations across all pixels, with values drawn uniformly between 0 and 50, produced slightly higher disruption but never more than $23\%$. 
These results suggest that uniform color manipulations do not significantly disturb the sign pattern in latent space.

\ifconf
\paragraph{Lossy format conversions}
\else
\paragraph{Lossy format conversions.}
\fi
Finally, we tested whether re-encoding the image in lossy formats could erase the watermark. 
JPEG compression at $15\%$ quality produced heavy visible degradation and a $32\%$ pre-decoding error rate, which BP reduced to $10\%$. 
WebP conversion at similar quality gave $34\%$ error, reduced to $15\%$ after decoding. 
In both cases, the watermark remained detectable.

\subsection{Our Successful Attack}

The failure of local manipulations suggests what is missing. 
Noise, blur, color shifts, and compression alter pixel values but preserve the coordinate system of the image. 
The encoder still sees essentially the same structure, and so the latent sign pattern (i.e., the PRC codeword $\cdw$) drifts only modestly. 
To break the watermark, we sought an operation that would force the encoder to reinterpret the image globally, while leaving the picture perceptually unchanged.

\ifconf
    \begin{figure*}[t!]
        \figCropsContent
    \end{figure*}
\else
    \begin{figure}[t!]
        \figCropsContent
    \end{figure}
\fi

\ifconf
    \begin{figure*}[t!]
        \figLatentsContent
    \end{figure*}
\else
    \begin{figure}[t!]
        \figLatentsContent
    \end{figure}
\fi

This reasoning led to the crop-and-resize attack. 
We take a watermarked $512\times512$ image, crop $15$ pixels from each side to obtain a $482\times482$ image, and then resize it back to $512\times512$ using bicubic interpolation. 
The crop removes only a narrow border, and the interpolation smoothly reconstructs missing pixels from their neighbors. 
Visually, the result is nearly indistinguishable from the original. 
\blue{We use a $15$\,px crop by default because it is more reliable across examples, although for tightly framed images a $10$\,px crop still suffices, as in Figure~\ref{fig:border_strategy}. A $5$\,px crop is no longer reliable. Bilinear interpolation shows the same qualitative behavior as bicubic interpolation.}
Examples are shown in Figure~\ref{fig:crop-resize-examples}, where side-by-side comparisons reveal no perceptible difference between the original and the cropped-and-resized images.

The effect on the watermark is decisive. 
Across more than one thousand test cases, the recovered latent after crop-and-resize showed pre-decoding error rates concentrated near $50\%$. 
This error rate is exactly the threshold at which BP decoding fails. 
Indeed, in every trial the decoder was unable to recover the codeword, and detection failed completely. 
Unlike the earlier attacks, which either left the watermark intact or visibly degraded the image, this single transformation erased the watermark while keeping the content unchanged.

In short, crop-and-resize is the only edit we found that preserves perceptual quality while driving the latent representation to the theoretical boundary. 
The watermark disappears, not because the image is damaged, but because the latent has been resampled in a way that overwhelms the error-correcting capacity of the code.

\ifconf
    \begin{figure*}[t!]
        \figVariantsContent
    \end{figure*}
\else
    \begin{figure}[t!]
        \figVariantsContent
    \end{figure}
\fi

\ifconf
\paragraph{Why It Works}
\else
\subsection{Why It Works}
\fi
The success of crop-and-resize can be traced to how the encoder interprets the image. 
Local edits (noise, blur, color shifts, compression) perturb pixel values but leave the grid of the image intact. 
The encoder still recognizes the same arrangement, and the latent sign pattern changes only in scattered places. 
Cropping followed by resizing is different. 
Cropping removes a thin border, and resizing does not simply stretch the remaining pixels. 
It recomputes every pixel value by bicubic interpolation on a new lattice. 
The encoder is then asked to describe the same visual content in a new coordinate system. 
This global resampling is what drives nearly half of the latent entries across zero, flipping their signs.

\ifconf
\paragraph{Latent-space effects}
\else
\paragraph{Latent-space effects.}
\fi
Figure~\ref{fig:latent_sign_analysis_main} shows this effect in detail. 
The original latent sign pattern (top left) encodes the watermark (i.e., the PRC codeword) as a pseudorandom arrangement of signs. 
After crop-and-resize, the recovered latent (bottom left) is visibly scrambled. 
The difference map before decoding (top right) shows that $48.07\%$ of the signs have flipped. 
This is essentially the information-theoretic boundary: once errors approach $50\%$, the codeword is indistinguishable from random. 
\blue{Of course, crop-and-resize is a structured image transformation rather than the i.i.d. corruption channel of \Cref{thm:impossible}. The point is that, empirically, it pushes the recovered sign pattern close to the same critical threshold at which the codeword becomes effectively indistinguishable from random.}
The bottom-right panel shows the result after belief propagation, the error-correcting decoder, has attempted to repair the errors. 
At this level of corruption, BP cannot recover the original codeword and instead converges to a different pseudorandom codeword. 
As a result, the measured error rises slightly to $48.96\%$. 
This increase is not paradoxical but diagnostic: the decoder has lost all correlation with the true codeword. 
The image itself is unchanged, but the watermark is irretrievable.

\ifconf
\paragraph{Variants}
\else
\paragraph{Variants.}
\fi

\ifconf
    \begin{figure*}[t!]
        \figLatentVariantsContent
    \end{figure*}
\else
    \begin{figure}[t!]
        \figLatentVariantsContent
    \end{figure}
\fi

Other global manipulations show the same pattern. Figure~\ref{fig:multi_attack_visualization} compares crop and resize with several related operations. 
Cropping and padding with black pixels produced about $16.7\%$ error and left a conspicuous border. 
Downscaling and then upscaling, even aggressively to $312\times312$ before resizing back, produced about $12.1\%$ error. 
Neither came close to the threshold. 
Downscaling and padding with black pixels did reach $\sim 50\%$ error (Figure~\ref{fig:multi_attack_latent_analysis}), but the added border made the alteration visually obvious, as shown in subfigure~(d) of Figure~\ref{fig:multi_attack_visualization}.
In contrast, crop and resize combined two properties: it perturbed the latent enough to destroy the watermark, and it did so while preserving the appearance of the image.

\ifconf
\paragraph{Generalization}
\else
\paragraph{Generalization.}
\fi
The vulnerability is not limited to this specific setup. 
Generative models often allow users to control framing explicitly. 
An adversary can request an image with an artificial border and then remove it by cropping and resizing. 
Figure~\ref{fig:border_strategy} illustrates this approach: the model produces an image with a frame, the border is removed, and the watermark disappears while the content remains intact. 
No knowledge of the watermark key is needed, and no optimization is required. 
The broader lesson is that transformations that globally resample the image while keeping their perceptual content largely unchanged can push the latent toward the robustness threshold.

\section*{Data and Code Availability}

Our theoretical results are self-contained. The empirical study in Section~\ref{sec:experiments} uses only synthetic images generated by Stable Diffusion together with the publicly described PRC-based watermark of Gunn, Zhao, and Song~\cite{gunn2025undetectable}. No proprietary data or data from human subjects is used.

To support reproducibility, we provide a repository at \url{https://github.com/ygoonati/Robust-Watermarking-Limits}.
The repository contains:

\begin{itemize}
  \item the code that interfaces with the watermark implementation and applies the crop-and-resize transformation;
  \item configuration files for the Stable Diffusion pipeline and prompts used in our experiments;
  \item scripts that reproduce the figures and numerical results in the paper from a fresh checkout.
\end{itemize}

These artifacts are sufficient to re-run the experiments and verify the empirical claims in the paper.

\ifconf
\section*{Proactive Prevention of Harm}

Our results characterize limits of watermarking schemes for generative models. A natural concern is that publishing an attack may make it easier to remove watermarks in practice.

Two points are important here. First, the concrete attack we study is deliberately simple. Cropping and resizing an image is a basic operation in standard image-processing tools, and the code we release does not go beyond this: it instantiates that operation in a specific Stable Diffusion pipeline with a specific PRC-based watermark and measures the resulting decoding behavior.

Second, the main contributions of the paper are information-theoretic. Theorems in Section~\ref{sec:limit-no-msg} show that, once an adversary can change more than a certain fraction of symbols, no cryptographic watermark in this class can be both robust and sound. This limitation follows from the structure of the problem and does not depend on our particular implementation. Making this explicit is intended to prevent overconfident robustness claims and to guide future watermark designs toward regimes where robustness is actually achievable.

The experiments involve only synthetic images and do not use personal data or sensitive datasets. The work has no proprietary or commercial ties, and it does not target or differentially affect any particular group.
\fi


\ifconf
    \begin{figure*}[t] 
        \figBorderContent
    \end{figure*}
\else
    \begin{figure}[t]
        \figBorderContent
    \end{figure}
\fi

\section*{Acknowledgments}
Danilo Francati was supported by project SERICS (PE00000014), under the
MUR National Recovery and Resilience Plan funded by the European Union-
NextGenerationEU. 
Daniele Venturi is member of the Gruppo Nazionale Calcolo
Scientifico Istituto Nazionale di Alta Matematica (GNCS-INdAM). His research was partially supported by project SERICS (PE00000014) and by project PARTHENON (B53D23013000006)---under the MUR National Recovery and Resilience Plan funded by the European Union-NextGenerationEU---by project BEAT---funded by Sapienza University of Rome---and by the AI-Sec Lab initiative between Sapienza University of Rome and the CYBER 4.0 Competence Center.

\FloatBarrier
\ifconf
	\bibliographystyle{IEEEtran}
\else
	\bibliographystyle{alpha}
\fi

\bibliography{references,abbrev3.bib,crypto.bib}

\appendices


\section{A Construction of Zero-bit Tamper-detection Code}\label{app:construction}
In this appendix we give an explicit information-theoretic construction of a zero-bit tamper-detection code. Our purpose is not to propose a practical scheme, but to show that the robustness threshold from Section~\ref{sec:limit-no-msg} is tight: there are codes that achieve soundness and tamper detection for any corruption rate strictly below the critical value. The construction is deliberately simple. The secret key itself serves as the codeword, and the decoder classifies a candidate string according to its Hamming distance from the key. This partitions the space of strings into three regions corresponding to $\valid$, $\tampered$, and $\invalid$, separated by a distance threshold $t$.

We now formalize this construction.

\begin{construction}\label{constr:no-msg}
Let $\cdwlen$ be the codeword length, and fix a parameter $\delta \in (0,1)$. Define $\thr = \cdwlen(1-\frac{1}{q})(1-\delta)$. The code $\Gamma = (\KGen, \Encode, \Decode)$ is defined as:
\begin{description}
    \item[Key Generation $\KGen(1^\secpar)$:] Sample a secret key $\sk \getsr \Sigma^{\cdwlen}$ uniformly at random, where $|\Sigma| = q \ge 2$.
    \item[Encoding $\Encode(\sk)$:] Output the codeword $\cdw = \sk$.
    \item[Decoding $\Decode(\sk, \cdw)$:] Given $\cdw \in \Sigma^{\cdwlen}$:
        \begin{itemize}
            \item If $\dist(\cdw, \sk) > \thr$, output $\invalid$.
            \item If $0 < \dist(\cdw, \sk) \leq \thr$, output $\tampered$.
            \item If $\cdw = \sk$, output $\valid$.
        \end{itemize}
\end{description}
\end{construction}

\noindent
The correctness property for this construction is immediate: the decoder always accepts the honest codeword, which is just the secret key.

What remains are the two core security properties: soundness, and robust detection of tampering up to the optimal threshold. Both follow from a simple analysis of Hamming distance and concentration.

\begin{thm}[Security of the Simple Construction]\label{thm:sec-constr-no-msg}
Let $\delta \in (0,1)$ be a \blue{constant}, and let $\cdwlen$ and $q = |\Sigma|$ such that $\frac{\cdwlen}{q}\in \omega(\log \secpar)$. \footnote{This can always be enforced by increasing the codeword length $\cdwlen$ with respect to the cardinality $q$ of $\Sigma$.} The zero-bit tamper-detection code $\Gamma$ from~\Cref{constr:no-msg} satisfies soundness and $\cF_{\alpha \cdwlen}$-tamper detection for $\alpha = (1-\frac{1}{q})(1-\delta)$.
\end{thm}

\begin{proof}
We prove each property in turn.

\textbf{Soundness:}
Fix any string $\hat{\cdw} \in \Sigma^{\cdwlen}$ and random key $\sk \getsr \Sigma^{\cdwlen}$.  
Let $X$ be the number of positions where $\hat{\cdw}[i] = \sk[i]$; $X$ is binomial with mean $\cdwlen/q$.  
The decoder outputs $\invalid$ unless
\[
X >  \cdwlen - \thr = \cdwlen - \cdwlen\Bigl(1-\frac{1}{q}\Bigr)(1-\delta) \geq \frac{\cdwlen}{q}(1+\delta),
\]
where the last inequality holds whenever $q \geq 2$.

By the Chernoff bound,
\[
\Prob\left[X \geq \frac{\cdwlen}{q}(1+\delta)\right] \leq \exp\left(-\frac{\delta^2 \cdwlen}{3q}\right),
\]
which is negligible whenever $\frac{\cdwlen}{q} \in \omega(\log \secpar)$ and $\delta > 0$ is constant.

\textbf{Tamper Detection:}
Fix any $f \in \cF_{\alpha \cdwlen}$ with $\alpha = (1-\frac{1}{q})(1-\delta)$.  
Let $\cdw = \sk$ and $\tilde{\cdw} = f(\cdw)$, with $\tilde{\cdw} \neq \cdw$.  
By definition of $f$, $\dist(\tilde{\cdw}, \sk) \leq \cdwlen(1-\frac{1}{q})(1-\delta) = \thr$.  
Thus, the decoder outputs $\tampered$.

Both properties hold as claimed.
\end{proof}

\end{document}